\newtheorem{theorem}{Theorem}[]
\newtheorem{lemma}[theorem]{Lemma}
\newtheorem{prop}[theorem]{Proposition}
\newtheorem{corollary}[theorem]{Corollary}
\newtheorem{remark}[theorem]{Remark}
\def\x{{\mathbf x}}
\newcommand{\R}{\mathbb{R}}
\newcommand{\E}{\mathbb{E}}
\renewcommand{\P}{\mathbb{P}}
\newcommand{\vct}[1]{\bm{#1}}
\title{A unified framework for linear dimensionality reduction in L1}
\author{Felix Krahmer and Rachel Ward}
\date{\today}
\begin{document}
\maketitle

\abstract{
For a family of interpolation norms $\| \cdot \|_{1,2,s}$ on $\mathbb{R}^n$, we provide a distribution over random matrices $\Phi_s \in \mathbb{R}^{m \times n}$ parametrized by sparsity level $s$ such that for a fixed set $X$ of $K$ points in $\mathbb{R}^n$, if $m \geq C s \log(K)$ then with high probability, $\frac{1}{2} \| \vct{x} \|_{1,2,s} \leq \| \Phi_s (\vct{x}) \|_1 \leq 2 \| \vct{x} \|_{1,2,s}$ for all $\vct{x} \in X$.  Several existing results in the literature roughly reduce to special cases of this result at different values of $s$: For $s=n$, $\| \vct{x} \|_{1,2,n} \equiv \| \vct{x} \|_{1}$ and we recover that dimension reducing linear maps can preserve the $\ell_1$-norm up to a distortion proportional to the dimension reduction factor, which is known to be the best possible such result. For $s=1$, $\| \vct{x} \|_{1,2,1} \equiv \| \vct{x} \|_{2}$, and we recover an $\ell_2 / \ell_1$ variant of the Johnson-Lindenstrauss Lemma for Gaussian random matrices.  Finally, if $\vct{x}$ is $s$-
sparse, then $\| \vct{x} \|_{1,2,s} = \| \vct{x} \|_1$ and we recover that 
$s$-sparse vectors in $\ell_1^n$ embed into $\ell_1^{\mathcal{O}(s \log(n))}$ via sparse random matrix constructions.
}


\section{Introduction} 
The theory for linear dimensionality reduction in Euclidean space has been the subject of much research in recent years.
The celebrated \emph{Johnson-Lindenstrauss (JL) Lemma} says that a small set of points in high-dimensional Euclidean space can be linearly embedded into a space of much lower dimension in such a way that Euclidean distances between the points are nearly preserved \cite{JL}.  More specifically, given a finite set $X \subset \mathbb{R}^n$ of size $| X | = K$, there exists a  linear map $\Phi: \mathbb{R}^n \rightarrow \mathbb{R}^m$ with $m = 9 \varepsilon^{-2} \log{(K)}$ such that $(1-\varepsilon) \| \vct{x} - \vct{y} \|_2 \leq \| \Phi(\vct{x-y}) \|_2 \leq (1+\varepsilon) \| \vct{x} - \vct{y} \|_2$ for all $\vct{x}, \vct{y} \in X$.  In the language of geometric embeddings, this says that $K$-point subsets of $\ell_2^n$ can be linearly embedded into $\ell_2^m$ with  $m = \mathcal{O}(\varepsilon^{-2} \log{(K)})$ and distortion $1 + \mathcal{O}(\varepsilon)$.  
 Remarkably, for a fixed finite set $X \subset \mathbb{R}^n$, taking $\Phi: \mathbb{R}^n \rightarrow \mathbb{R}^m$ as a random matrix whose entries are independent and identically-distributed mean-zero Gaussian random variables will achieve such an embedding with high probability \cite{dasgupta2003elementary}. Because such a probabilistic embedding is easy to construct and is \emph{oblivious} to the content of $X$, random projections have become an efficient pre-processing step for a wide range of algorithms in numerical linear algebra, compressive sensing, manifold learning, and theoretical computer science \cite{ind-01,sar,hmt, liwomaroty07,randomprojections,badadewa08, rw09, caneel10, kw12}.

It is natural to ask about embedding results for more general $\ell_p$ norms, $1 \leq p \leq \infty$, e.g., a result of the form $(1- \varepsilon)\| \vct{x} \|_{\ell_p^n} \leq \| \Phi \vct{x} \|_{\ell_p^m} \leq (1 + \varepsilon) \| \vct{x} \|_{\ell_p^n}$ for $m \ll n$.   Unfortunately, the strong results realized by Gaussian random matrices is specific to the case $p=2$.  In fact,  an embedding result of the Johnson-Lindenstrauss type is impossible for $p \neq 2$ using \emph{any} linear embedding: as shown by Charikar and Sahai  \cite{charikar2002dimension} for $\ell_1$ and generalized by Lee, Mendel, and Naor in \cite{lmn05} to $\ell_p$ for $1 \leq p \leq \infty$,  there are arbitrarily large $K$-point subsets $X$ of $\ell_p$  such that any linear mapping $T: X \rightarrow \ell_p^m$ incurs distortion at least $D = \Omega\left( (\frac{K}{m})^{\left| 1/p - 1/2 \right| } \right)$.   In particular, for $\ell_1$, dimensionality reduction is possible in general only if we allow for large distortion; that is, for 
an arbitrary finite set $X$ of $|X| = K$ points in $\mathbb{R}^n$, if one wishes for a map $T: \ell_1^n \rightarrow \ell_1^m$ such that
$\frac{1}{\sqrt{D}} \| \vct{x} - \vct{y} \|_1 \leq \| \Phi(\vct{x-y}) \|_1 \leq \sqrt{D} \| \vct{x} - \vct{y} \|_1$ holds for all $\vct{x,y} \in X$, then necessarily $m \geq C D^{-2} K$.  This bound is tight; see \cite{newman2010finite, schechtman2011dimension}.  The $\ell_1$ norm is of particular interest for several reasons,  one of which being that in high dimensions, the $\ell_1$ norm is more meaningful than the $\ell_2$ norm (and much more meaningful than the $\ell_p$ norm for $p$ large) for inferring neighborliness in large data sets  \cite{bgrs99, hak2000}. 

Still, the lower bounds for linear dimension reduction in $\ell_1$ represent a  \emph{worst-case} bound over arbitrary sets of points.  Restricting attention to structured subsets of points in $\ell_1^n$, much stronger statements can be made.  Of particular interest is the subset of \emph{sparse} vectors, where we recall that $\vct{x} \in \mathbb{R}^n$ is $s$-sparse if it has non-zero coordinates in at most $s$ dimensions.   In \cite{charikar2002dimension}, it was shown that an arbitrary set of $K$ $s$-sparse vectors can be linearly embedded into $\ell_1^{m}$ with distortion $1 + \varepsilon$ once $m \geq C\varepsilon^{-2} s^2 \log{K}$.  A uniform result over $s$-sparse vectors was subsequently shown by Berinde, Gilbert, Indyk, Karloff, and Strauss in  \cite{berinde2008combining}, which we state as a proposition.
\begin{prop}[From \cite{berinde2008combining}]
\label{expandgraph}
Fix $n, s \in \mathbb{N}$ and $\varepsilon \in (0,1)$, and fix $m \in \mathbb{N}$ satisfying $m \geq C\varepsilon^{-2} s \log{n}$.  There exist matrices $\Phi \in \mathbb{R}^{m \times n}$ such that 
\begin{equation}
\label{1RIP}
(1 - \varepsilon) \| \vct{x} \|_{\ell_1^n}  \leq \| \Phi \vct{x} \|_{\ell_1^m} \leq  (1+\varepsilon)\| \vct{x} \|_{\ell_1^n} \quad \quad \forall \vct{x} \in \mathbb{R}^n: \hspace{2mm} \| \vct{x} \|_0 \leq s.
\end{equation}
Such a matrix $\Phi$ is said to have the \emph{1-restricted isometry property} of order $s$ and level $\varepsilon$, or 1-RIP for short.   
\end{prop}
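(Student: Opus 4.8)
The plan is to realize $\Phi$ as a rescaled bipartite adjacency matrix of an unbalanced expander graph, following Berinde--Gilbert--Indyk--Karloff--Strauss. Let $G$ be a bipartite graph with left vertex set $[n]$, right vertex set $[m]$, and left-degree exactly $d$, and set $\Phi_{ji} = 1/d$ if $(i,j)$ is an edge of $G$ and $\Phi_{ji} = 0$ otherwise, so that each column of $\Phi$ has $\ell_1$-norm $1$. Then the upper bound $\|\Phi\vct{x}\|_1 \le \|\vct{x}\|_1$ is immediate from the triangle inequality together with the column normalization, so the entire content is the lower bound; the relevant graph property is \emph{$(s,d,\theta)$-expansion}: every $S \subseteq [n]$ with $|S| \le s$ has a neighborhood $N(S) \subseteq [m]$ of size at least $(1-\theta)\,d\,|S|$.

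\textbf{Step 1: expansion implies 1-RIP.} Fix $\vct{x}$ with $\|\vct{x}\|_0 \le s$, put $S = \supp(\vct{x})$, and relabel the coordinates on $S$ so that $|x_1| \ge |x_2| \ge \cdots$. For $i \in S$ call an edge $(i,j)$ a \emph{collision edge} if the right vertex $j$ also has a neighbor among $\{1,\dots,i-1\}$, and let $c_i$ be the number of collision edges at $i$. Estimating each inner sum in $\|\Phi\vct{x}\|_1 = \tfrac1d\sum_j\bigl|\sum_{i\sim j}x_i\bigr|$ from below by the smallest-index (hence largest-magnitude) contributing coordinate minus the rest, then summing over $j$ and regrouping the two resulting sums by left endpoint, one obtains $\|\Phi\vct{x}\|_1 \ge \|\vct{x}\|_1 - \tfrac2d\sum_{i}c_i|x_i|$. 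Now comes the point that makes the argument work: since every prefix $S_k = \{1,\dots,k\}$ has size at most $s$, expansion applied to $S_k$ gives $\sum_{i\le k}c_i = dk - |N(S_k)| \le \theta d k$ for \emph{all} $k$, and summing these inequalities by parts against the nonincreasing sequence $(|x_i|)$ yields $\sum_i c_i|x_i| \le \theta d\|\vct{x}\|_1$. Hence $\|\Phi\vct{x}\|_1 \ge (1-2\theta)\|\vct{x}\|_1$, and choosing $\theta = \varepsilon/2$ completes this step.

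\textbf{Step 2: existence of the expander.} I would produce $G$ by the probabilistic method: let each left vertex independently pick $d$ distinct right neighbors uniformly at random. For a fixed $S$ with $|S| = s$, the number of collision edges among the $ds$ edges leaving $S$ is stochastically dominated by a $\mathrm{Binomial}(ds,\,ds/m)$ random variable, so provided $m \gtrsim ds/\theta$ a Chernoff bound gives $\mathbb{P}\bigl(|N(S)| < (1-\theta)ds\bigr) \le e^{-c\theta d s}$. A union bound over the at most $(en/s)^{s}$ sets $S$ then succeeds as soon as $d \gtrsim \theta^{-1}\log(n/s)$, which forces $m \gtrsim \theta^{-2}s\log(n/s)$; since $\log(n/s) \le \log n$ and $\theta = \varepsilon/2$, the hypothesis $m \ge C\varepsilon^{-2}s\log n$ is enough once $C$ is chosen large.

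The main obstacle is Step 1 --- more precisely, recognizing that controlling only $\sum_i c_i \le \theta ds$ (expansion of $S$ itself) is insufficient, because the mass of $\vct{x}$ may be concentrated exactly on the coordinates where collisions occur. The fix is to invoke expansion of every prefix $S_k$ and combine the whole family of bounds via summation by parts against the sorted magnitudes; everything else (the trivial upper bound, the Chernoff and union-bound bookkeeping in Step 2, and the constant-factor rescaling $\theta = \varepsilon/2$) is routine.
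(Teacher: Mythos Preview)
Your proposal is correct and is essentially the standard Berinde--Gilbert--Indyk--Karloff--Strauss argument that Proposition~\ref{expandgraph} cites. Note, however, that the paper does \emph{not} give its own proof of this proposition: it is quoted as a result from \cite{berinde2008combining}, with the sentence following the proposition merely recording the parameters of the random construction. So there is no ``paper's proof'' to compare against here.

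That said, the paper does develop closely related combinatorial machinery in Proposition~\ref{ballsinbins} and Corollary~\ref{ballsinbins_cor}, and it is worth noting how your argument relates to that. Both proofs hinge on the same collision-counting idea: order the support by decreasing magnitude, bound the number of collision edges among the first $k$ coordinates by $\theta d k$ for every prefix $k$, and then combine these prefix bounds against the sorted magnitudes (you phrase this as summation by parts; the paper phrases it as maximizing \eqref{rhs_sum} under the prefix-sum constraints, which amounts to the same thing). The substantive difference is in how the prefix bound is obtained and what constant it yields. You invoke the abstract $(s,d,\theta)$-expansion property and then produce an expander with $m \gtrsim \theta^{-1} s d$ via Chernoff plus a union bound over all $s$-subsets, obtaining arbitrarily small $\theta$. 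The paper instead works directly with the balls-into-bins model at the tight choice $m = sd$, uses an Azuma--Hoeffding concentration for the fraction of empty bins after each round, and arrives at the fixed constant $\theta = e^{-1} + \varepsilon$ (hence 1-RIP level $2e^{-1} + 2\varepsilon$, not arbitrarily small). The paper explicitly flags this distinction after Corollary~\ref{ballsinbins_cor}: its construction trades away the freedom to make $\theta$ small in exchange for the exact relation $m = sd$ needed elsewhere. Your argument is the one that actually delivers Proposition~\ref{expandgraph} as stated.
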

Explicit constructions of such matrices are \emph{binary} and \emph{sparse}; specifically, a sparse binary random matrix having $d = C \varepsilon^{-1} \log(n)$ ones per column and having $m \geq C' \varepsilon^{-1} s d = C'' \varepsilon^{-2} s \log(n)$ rows will have the 1-RIP with high probability  \cite{berinde2008combining, fr13}.   The 1-RIP property is essentially equivalent to the combinatorial notion of \emph{expansion}
 of the sparse bipartite graph underlying the measurement matrix.
 

\subsection{Contribution of this work} The aim of this work is to initiate a unified framework for linear dimension reduction in $\ell_1$.  We provide a general theorem which roughly interpolates between several existing distinct results.

More specifically, we consider a family of rearrangement-invariant block $\ell_1/\ell_2$ norms $\|   \cdot \|_{1,2,s}$ on $\mathbb{R}^n$ parametrized by block size $s \in [n]$, as follows.  For a given $\vct{x} \in \mathbb{R}^n$, consider a partition of its support into disjoint subsets $S_1, S_2, \dots $ so that $S_1$ indexes the largest $s$ elements of $\vct{x}$ in magnitude, $S_2$ indexes the next $s$ largest elements, and so on, and $S_{\lceil{n/s \rceil}}$ may contain between zero and $s-1$ elements.   Here on out, we will refer to the (not necessarily unique) decomposition $\vct{x} = \left( \vct{x}_{S_1}, \vct{x}_{S_2}, \dots  \right)$ as the \emph{$s$-block decreasing rearrangement} of $\vct{x}$.  The norm of interest is
\begin{equation}
\label{thenorm}
\| \vct{x} \|_{1,2,s} := \sqrt{ \sum_{\ell =1}^{\lceil{n/s \rceil}} \| \vct{x}_{S_{\ell}} \|_1^2 }
\end{equation}
for completeness, we verify in the appendix that this indeed defines a norm.   We call this an interpolation norm as on the one extreme, $s=1$ and $\| \cdot \|_{1,2,s} = \| \cdot \|_2$; at the other extreme, $s = n$ and $\| \cdot \|_{1,2,s} = \| \cdot \|_1.$

Together with this interpolation norm, we consider random matrices $\Phi_s \in \mathbb{R}^{m \times n}$ of the form
\begin{equation}
\label{thebed}
\Phi_s = A_s \circ G
\end{equation}
where $\circ$ denotes the Hadamard (entrywise) product, and
\begin{enumerate}
\item $G = (g_{j,k}) \in \mathbb{R}^{m \times n}$ is a random matrix populated with independent and identically distributed standard Gaussian entries, and 
\item $A_s = (a_{j,k}) \in \mathbb{R}^{m \times n}$ is a random matrix populated with zeros and ones, having exactly $d = \frac{m}{s}$ ones per column, the locations of which are chosen uniformly from $[m]$ without replacement. 
\end{enumerate}
Our main result is as follows. 
\begin{theorem}
\label{main}
Fix $m, n \in \mathbb{N}$ and parameter $\varepsilon > 0$.  Fix $s \in [n]$ such that $m/s \in \mathbb{N}$, and consider the random matrix $\Psi_s =  \sqrt{ \frac{2}{\pi} } \frac{s}{m} \Phi_s$.  For any fixed $\vct{x} \in \mathbb{R}^n$, it holds with probability exceeding $1 - 4n \exp( - \varepsilon^2 m / 8 ) -  2 \exp\left( -\varepsilon^2 \beta_0^2(m/(8s)) \left(\| \vct{x} \|_{1,2,s}/ \| \vct{x} \|_2 \right)^2 \right)$ that 
 \begin{equation}
\label{coarse:embed}
 (.63 - \varepsilon) \| \vct{x}  \|_{1,2,s}  \leq  \| \Psi_s \vct{x} \|_{1}  \leq  (1.63 +\varepsilon) \| \vct{x}  \|_{1,2,s}. 
 \end{equation}
\end{theorem}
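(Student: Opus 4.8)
The plan is to fix $\vct{x} \in \mathbb{R}^n$ and analyze $\| \Psi_s \vct{x} \|_1 = \sqrt{2/\pi}\,\frac{s}{m} \sum_{j=1}^m \left| \sum_{k=1}^n a_{j,k} g_{j,k} x_k \right|$ by conditioning first on the random support matrix $A_s$ and then averaging over the Gaussian matrix $G$. Conditioned on $A_s$, each row $j$ picks out a subset $T_j \subset [n]$ of size $|T_j| = $ (number of ones in row $j$), and $\sum_k a_{j,k} g_{j,k} x_k$ is a mean-zero Gaussian with variance $\| \vct{x}_{T_j} \|_2^2$; hence $\E\left[ \left| \sum_k a_{j,k} g_{j,k} x_k \right| \,\big|\, A_s \right] = \sqrt{2/\pi}\, \| \vct{x}_{T_j} \|_2$. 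So the conditional expectation of $\| \Psi_s \vct{x} \|_1$ is $\frac{s}{m} \cdot \frac{2}{\pi} \sum_{j=1}^m \| \vct{x}_{T_j} \|_2$, and the first task is to relate $\frac{s}{m} \sum_j \| \vct{x}_{T_j} \|_2$ to $\| \vct{x} \|_{1,2,s}$. Here the $s$-block decreasing rearrangement \eqref{thenorm} enters: since each column has exactly $d = m/s$ ones placed without replacement among the $m$ rows, the row sizes $|T_j|$ concentrate and, crucially, $\sum_j \|\vct{x}_{T_j}\|_2$ should be comparable to $d$ times a quantity squeezed between $\|\vct{x}\|_2$ and $\frac{1}{s}\|\vct{x}\|_1$; the constants $.63 \approx \frac{2}{\pi}$ and $1.63$ will come out of this comparison (roughly, $\frac{2}{\pi}\|\vct x\|_{1,2,s} \le \frac{s}{m}\sum_j\|\vct x_{T_j}\|_2 \le (\frac{2}{\pi}+1)\|\vct x\|_{1,2,s}$ in expectation, up to lower-order corrections).

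The second task is a two-level concentration argument. Over the Gaussian randomness (with $A_s$ fixed), $\| \Psi_s \vct{x} \|_1$ is a Lipschitz function of the $mn$ i.i.d.\ Gaussians $g_{j,k}$ with Lipschitz constant $\sqrt{2/\pi}\,\frac{s}{m}\big(\sum_j \|\vct x_{T_j}\|_2^2\big)^{1/2} \le \sqrt{2/\pi}\,\frac{s}{m}\sqrt{d}\,\|\vct x\|_2 = \sqrt{2/\pi}\,\sqrt{s/m}\,\|\vct x\|_2$, so Gaussian concentration of Lipschitz functions gives a deviation bound of order $\exp(-c\,\varepsilon^2 (m/s)(\|\vct x\|_{1,2,s}/\|\vct x\|_2)^2)$ — this is the source of the second exponential term in the theorem, with $\beta_0$ the implicit absolute constant. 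Then one must handle the randomness of $A_s$: the conditional mean $\frac{s}{m}\cdot\frac{2}{\pi}\sum_j \|\vct x_{T_j}\|_2$ is itself random, but it is a function of the placement of $nd$ balls into $m$ bins, and changing the location of one ball changes it by $O(\|\vct x\|_\infty \cdot \frac{s}{m})$-ish; a bounded-differences / Azuma argument (or a union bound over the $n$ columns, each contributing a simple random placement, which is where the $4n\exp(-\varepsilon^2 m/8)$ term and the $m/8$ exponent arise) shows this random mean is within $\varepsilon\|\vct x\|_{1,2,s}$ of its expectation. Combining: expectation $\approx$ (constant in $[.63,1.63]$)$\cdot\|\vct x\|_{1,2,s}$, plus two $\varepsilon$-fluctuations, yields \eqref{coarse:embed}.

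The main obstacle I expect is the deterministic/combinatorial heart of the first task: showing that for the specific random design $A_s$ (exactly $d=m/s$ ones per column, placed uniformly without replacement), the quantity $\frac{s}{m}\sum_{j=1}^m \|\vct x_{T_j}\|_2$ is, in expectation and with the right concentration, squeezed between roughly $\|\vct x\|_{1,2,s}$ and roughly $(1 + \frac{\pi}{2})\|\vct x\|_{1,2,s}$ — equivalently $\frac2\pi\|\Psi_s\vct x\|_1$ lands in $[.63,1.63]\cdot\|\vct x\|_{1,2,s}$. The lower bound is a Cauchy–Schwarz-type argument exploiting that within each row the $\ell_2$ norm dominates, summed over a near-partition of the coordinates into blocks of size $\approx s$; the upper bound must account for the worst case where a row collects coordinates from many different blocks of the decreasing rearrangement, and this is where the block structure of $\|\cdot\|_{1,2,s}$ is essential and where the factor $1.63 = .63 + 1$ rather than $2\times.63$ genuinely enters. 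I would also need to verify the degenerate-case bookkeeping (when $\|\vct x\|_{1,2,s}/\|\vct x\|_2$ is close to $1$, i.e.\ $\vct x$ nearly $s$-sparse, the second probability term is weak but then the first term already suffices since $m \ge s$), and handle the last, possibly-shorter block $S_{\lceil n/s\rceil}$ and non-uniqueness of the rearrangement without changing the constants.
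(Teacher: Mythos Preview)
Your two-stage conditioning structure (fix $A_s$, compute the conditional mean $\mathbb{E}_G\|\Phi_s\vct x\|_1 = \sqrt{2/\pi}\sum_j\|\vct x_{T_j}\|_2$, then apply Gaussian Lipschitz concentration) matches the paper exactly, and your identification of the second failure term is correct. The gap is in the combinatorial heart, and it stems from a misreading of the constants. You are assuming $.63 \approx 2/\pi$ and $1.63 \approx 2/\pi+1$, i.e.\ that the constants come from the half-normal mean. They do not: in the paper the normalization is $\Psi_s=\tfrac{1}{d\beta_0}\Phi_s$ (so $\beta_0=\sqrt{2/\pi}$ cancels cleanly), and what must be shown is
\[
(1-e^{-1}-\varepsilon)\,\|\vct x\|_{1,2,s}\ \le\ \frac{1}{d}\sum_{j=1}^m\Big(\sum_k a_{j,k}x_k^2\Big)^{1/2}\ \le\ \sqrt{21/8}\,\|\vct x\|_{1,2,s}.
\]
That $1-e^{-1}\approx .632$ happens to be numerically close to $2/\pi\approx .637$, and $\sqrt{21/8}\approx 1.62$ to $2/\pi+1\approx 1.64$, is pure coincidence.

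Concretely, two of your claims fail. First, your lower bound ``$\tfrac{s}{m}\sum_j\|\vct x_{T_j}\|_2\gtrsim \|\vct x\|_{1,2,s}$ via Cauchy--Schwarz'' is false: for an $s$-sparse $\vct x$ one has $\tfrac{1}{d}\sum_j\|\vct x_{T_j}\|_2\le \tfrac{1}{d}\sum_j\|\vct x_{T_j}\|_1=\|\vct x\|_1=\|\vct x\|_{1,2,s}$, and the typical value is strictly smaller --- about $(1-e^{-1})\|\vct x\|_{1,2,s}$ --- because the $m=sd$ ones coming from the $s$ relevant columns collide in roughly an $e^{-1}$ fraction of the $m$ rows. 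The paper obtains the lower bound by a balls-into-bins argument applied \emph{separately to each block $S_\ell$} of the $s$-block decreasing rearrangement (this is where the ordering of coordinates within each block matters and where the $4n\exp(-\varepsilon^2 m/8)$ term arises, via a union over the $\lceil n/s\rceil$ blocks and an Azuma bound within each), then combines the blocks by a simple triangle inequality. Second, the upper bound $1.63$ is not probabilistic at all: it is a \emph{deterministic} inequality valid for every $\{0,1\}$-matrix with $d$ ones per column, proved by splitting off the head block $S_1$, applying H\"older to the tail, and controlling $\|\vct x_{S_\ell}\|_2$ via the norm inequality $\|\vct z\|_2\le \|\vct z\|_1/\sqrt{k}+\tfrac{\sqrt{k}}{4}(\max|z_i|-\min|z_i|)$. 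Your bounded-differences plan for $A_s$ is in the right spirit but aimed at the wrong quantity; the randomness of $A_s$ is needed only for the lower bound, and there it enters block by block, not column by column.
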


\begin{remark}
\emph{
The block norm $\| \cdot \|_{1,2,s}$ has appeared previously in \cite{gg84, litvak2005smallest}, and is closely related to the so-called \emph{${\cal K}$-interpolation norm} $K(\vct{x},t) = \sum_{j=1}^{t^2} x_j^{*} + t(\sum_{j > t^2} (x_j^{*})^2)^{1/2}$ (where $\vct{x}^{*}$ denotes the decreasing rearrangement of $\vct{x}$), which is well-known in the theory of interpolation of Banach spaces \cite{bh88, h70}, and was used in the related context of upper and lower bounds for Rademacher sums in \cite{montgomery90}. As shown in Proposition \ref{exp:upper}, these two norms are equivalent up to a factor of $1.63$:
$$
\| \vct{x} \|_{1,2,s}  \leq {\cal K}(\x,\sqrt{s}) \leq 1.63 \| \vct{x} \|_{1,2,s}.
$$
}
\end{remark}

\bigskip

\noindent Theorem \ref{main} roughly interpolates between three existing results in the literature.
\begin{enumerate}
\item
For block-size $s=1$,  the block norm $\| \cdot \|_{1,2,s}$ coincides with the Euclidean norm $\| \cdot \|_{2}$, and $\Psi_s$ reduces to a properly-normalized  i.i.d. Gaussian random matrix.  Up to distortion $(1.63 + \varepsilon)/(.63 - \varepsilon) \approx 2.6 + \mathcal{O}(\varepsilon)$ instead of distortion $1 + \mathcal{O}(\varepsilon)$, Theorem \ref{main} recovers a well-established $\ell_2/\ell_1$ Johnson-Lindenstrauss concentration result for Gaussian random matrices, see for instance Lemma 5.3 of \cite{pv13}.

\item If $\vct{x} \in \mathbb{R}^n$ is $s$-sparse, then $\| \vct{x} \|_{1,2,s} = \| \vct{x} \|_1$.  In this case, Theorem \ref{main} with parameter $s$ recovers the 1-RIP embedding result from Proposition \ref{expandgraph} for sparse vectors, albeit only for the particular $s$-sparse vector $\vct{x}$ and not all $s$-sparse vectors.  Insufficient concentration of the Gaussian matrix $G$ prevents us from passing to a uniform sparse embedding result using $\Phi_s = A_s \circ G$ at number of measurements $m = \mathcal{O}(s \log(n))$; however, as shown in Corollary \ref{ballsinbins_cor}, we do recover the result of Proposition \ref{expandgraph} (up to distortion $(1.63 + \varepsilon)/(.63 - \varepsilon) \approx 2.6 + \mathcal{O}(\varepsilon)$ instead of distortion $1 + \mathcal{O}(\varepsilon)$) if we use for embedding the binary matrix $A_s$ alone, rather than the composite matrix $\Phi_s = A_s \circ G$.  

\item For block-size $s= n/D^2$ in Theorem \ref{main} with fixed constant distortion $D \geq 1$, the bock norm $\|\cdot \|_{1,2,s}$ is equivalent to the $\| \cdot \|_{1}$ norm up to a multiplicative factor of $D$.  In this case, Theorem \ref{main} produces an explicit embedding which realizes, up to a factor of $\log{(n)}$,  the best-possible dependence of $m = \Omega(n/D^2)$ dimensions necessary for embedding an arbitrary $n$-point set in $\ell_1$ into $\ell_1^m$ with constant distortion $D$.  

\end{enumerate} 
While we expect that the lower and upper distortion bounds $.63 - \varepsilon$ and $1.63+\varepsilon$ can be somewhat reduced, it is not possible to bring them down to $1 - \varepsilon$ and $1 + \varepsilon$, respectively. In Proposition~\ref{prop:counterex}, we provide two classes of examples and show that between them there is a constant distortion factor even in the asymptotic limit.

\begin{remark}
\emph{
It is noteworthy that, in contrast to the hardness results for linear dimension reduction in $\ell_1$, it \emph{is} possible to approximate the $\ell_1$ norm of a vector $\vct{x} \in \mathbb{R}^n$ up to distortion $1+ \varepsilon$ from a linear projection $\Phi \x$ into $m = \mathcal{O}(\varepsilon^{-2})$ dimensions if one considers other functions of $\Phi \x$ besides the $\ell_1$ norm.  The first result of this kind, provided in \cite{indyk06stable}, shows that taking $\Phi$ to have i.i.d. Cauchy-distributed entries (or i.i.d. from a $p$-stable distribution more generally for approximating the $\ell_p$ norm, $0 < p < 2$), $\| \vct{x} \|_1 = (1 \pm \varepsilon) \text{median}_k | (\Phi x)_k |$ with high probability and with $m = \mathcal{O}(\varepsilon^{-2})$.  Several subsequent works \cite{li08, knw10, nw10, tz12} provide estimators other than the median and random embedding matrices $\Phi$ other than Cauchy random matrices which are more optimized for practical implementations. 
}
\end{remark}

\subsection{Outline of the proof of Theorem \ref{main}} The proof of Theorem \ref{main} naturally splits in two parts.  First, we treat $A_s$ as fixed, so that $G$ is the only source of randomness in $\Phi = A_s \circ G$.  In  subsection \ref{sec:conditional} we use concentration estimates for sums of half-normal random variables to show that $\| \Phi \vct{x} \|_1$ has subgaussian concentration about its mean with variance $\sigma^2 = \frac{s}{m}\| \vct{x} \|_2^2$.  We consider $G$ as a Gaussian random matrix because $\mathbb{E}_{G} \| \Phi \vct{x} \|_1$ can be calculated explicitly: 
\begin{equation}
\label{exactformula}
\mathbb{E}_{G} ( \| \Phi \vct{x} \|_1) = \sqrt{\frac{2}{\pi}} \sum_{j=1}^m \left( \sum_{k=1}^n a_{j,k} x_k^2 \right)^{1/2}.
\end{equation}
In subsection \ref{sec:upper} we show, using only that $A_s = (a_{j,k}) \in \{0,1\}^{m \times n}$ has $m/s$ ones per column,
  \begin{equation}
  \label{upper_bound}
\sum_{j=1}^m \left( \sum_{k=1}^n a_{j,k} x_k^2 \right)^{1/2}  \leq   \frac{1.63m}{s}  \| \vct{x} \|_{1,2,s}.
  \end{equation}
In the second part of the proof, in subsection \ref{sec:balls}, we treat $A_s$ as random, and show that for fixed  $\vct{x} \in \mathbb{R}^n$, a lower bound is obtained with high probability with respect to the realization of $A_s$:
  $$
\sum_{j=1}^m \left( \sum_{k=1}^n a_{j,k} x_k^2 \right)^{1/2}  \geq (.63 - \varepsilon)\frac{m}{s} \| \vct{x} \|_{1,2,s} .
  $$
This lower bound follows rather directly from an estimate we show for each $s$-sparse component $\vct{x}_{S_{\ell}}$ of the block decreasing rearrangement of $\vct{x}$:
\begin{equation}
\label{weak:expand}
\sum_{j=1}^m \left( \sum_{k \in S_{\ell}} a_{j,k} x_k^2 \right)^{1/2} \geq (.63 - \varepsilon)\frac{m}{s}  \| \vct{x}_{S_{\ell}} \|_1.
\end{equation}
In words, \eqref{weak:expand} amounts to showing that by drawing $A \in \{0,1\}^{m \times n}$ having $m/s$ per column at random, we do not deviate much from the ideal situation where $A$, restricted to the columns indexed by $S_{\ell}$, contains \emph{exactly} one non-zero entry in each row, whence $\sum_{j=1}^m \left( \sum_{k \in S_{\ell}} a_{j,k} x_k^2 \right)^{1/2} = \frac{m}{s}\| \vct{x}_{S_{\ell}} \|_1$.   To verify that \eqref{weak:expand} is satisfied with sufficiently high probability, we use a ``balls into bins" analysis and borrow techniques from \cite{berinde2008combining} and \cite{fr13} used to show that a similar matrix construction satisfies the 1-RIP in Proposition \ref{expandgraph}.  We note that our binary matrix construction differs from those constructions in that we use exactly $m = sd$ measurements ($d$ is the number of ones per column), allowing us to apply the bound \eqref{upper_bound} with minimal constant 1.63.
  
\bigskip

\noindent As shown in Section \ref{proof:thm}, these ingredients can be combined to prove Theorem \ref{main}.  
  
\begin{remark}
\emph{
Without much additional effort, the Gaussian random matrix $G$ in Theorem \ref{main} can be replaced by a Bernoulli random matrix $B$, that is, a matrix whose entries are independent random variables which are $\pm 1$ with equal probability.   Indeed, embeddings of the form $\Phi = A_s \circ B$ are more appealing from a practical point of view as their entries are contained in $\{0,1,-1\}$, and are thus  more easily implemented and stored. Using a Bernoulli random matrix, the exact formula \eqref{exactformula} for the conditional expectation will no longer hold, but by the Khintchine inequality (using the optimal constants provided in \cite{kbest}), we still have 
$$
\frac{1}{\sqrt{2}} \sum_{j=1}^m \left( \sum_{k=1}^n a_{j,k} x_k^2 \right)^{1/2} \leq \mathbb{E}_{B} ( \| (A_s \circ B) \vct{x} \|_1) \leq  \sum_{j=1}^m \left( \sum_{k=1}^n a_{j,k} x_k^2 \right)^{1/2}.
$$
As a result, the analog of Theorem \ref{main} using a Bernoulli matrix will have slightly worse constants, but otherwise remains unchanged.  
}
\end{remark}

\subsection{Notation}  Throughout the paper, $C, c, C_1, \dots$ denote absolute constants whose values may change from line to line.  For integer $n$, we write $[n] = \{1,2,\dots, n\}$.  Vectors are written in bold italics, e.g. $\vct{x}$, and their coordinates written in plain text, e.g.  the $i$-th component of $\vct{x}$ is $x_i$.  For a subset $S \subset [n]$, $\vct{x}_S$ is the vector $\vct{x}$ restricted to the elements indexed by $S$, and may be treated as a dense vector in $\mathbb{R}^s$ or a sparse vector in $\mathbb{R}^n$ depending on the context.   Similarly, we denote by $A_S$ the matrix $A$ restricted to the columns indexed by $S$.

We denote by $\ell_p^n$ the space $\mathbb{R}^n$ with the $\ell_p$ norm   $\| \vct{x} \|_p = \left( \sum_{i=1}^n |x_i|^p \right)^{1/p}, 1 \leq p < \infty,$ $\| \vct{x} \|_{\infty}  =\max_i | x_i |$.  When the ambient dimension is not important, we simply write $\ell_p$.  The number of non-zero coordinates of a vector $\vct{x}$ is denoted by $\| \vct{x} \|_0 = | \text{supp}(\vct{x}) |$.

We write $f(u) = \mathcal{O}(g(u))$ if and only if there exists a positive real number $M$ and a real number $u_0$ such that $|f(u)| \leq M |g(u)|$ for all $u > u_0$.  We write $f(u)=\Omega(g(u))$ if and only if $g(u) = \mathcal{O}(f(u))$.  For a function $f$ of random variables $\vct{X} = (X_1, X_2)$, we write $\mathbb{E}_{X_1} f(\vct{X}) := \mathbb{E} \left[ f(\vct{X}) | X_2 \right]$ for the conditional expectation, which is itself a function of the random $X_2$.  

Finally, recall that an embedding $f: X \rightarrow Y$ of a metric space $(X,d)$ into a metric space $(Y, d')$ is said to have distortion $D \geq 1$ if there are constants $A, B \geq 1$ satisfying $AB \leq D$ such that for all $\vct{x,y} \in X$, 
$$A^{-1} d(\vct{x,y}) \leq d'(f(\vct{x}), f(\vct{y})) \leq B d(\vct{x,y}).$$

\section{Proof ingredients}
The remainder of the paper is devoted to proving Theorem \ref{main}.  We provide proof ingredients in this section, and we put the ingredients together in Section \ref{proof:thm}. 

{\subsection{Concentration lemmas}\label{sec:conditional}}

\noindent 
In this section we treat the binary matrix $A= A_s$ as fixed, and study the concentration of $\| \Phi \vct{x} \|_1 = \| (A \circ G) \vct{x} \|_1$ around its conditional expectation $\mathbb{E}_{G} \| (A \circ G)  \vct{x} \|_1$.   The first lemma is straightforward.

\begin{lemma}
\label{lemma:mean}
Fix $\vct{x} \in \mathbb{R}^n$ and $A = (a_{j,k} ) \in \{0,1\}^{m \times n}$.  Let $G = (g_{j,k}) \in \mathbb{R}^{m \times n}$ have i.i.d. Gaussian entries, and consider the random matrix $A \circ G$. The expectation of $ \| (A \circ G) \vct{x} \|_1$ is given by
$$
\mathbb{E}_{G} \| (A \circ G) \vct{x} \|_1 = \sqrt{\frac{2}{\pi}} \sum_{j=1}^m \left( \sum_{k=1}^n a_{j,k} x_{k}^2  \right)^{1/2}.
$$
\end{lemma}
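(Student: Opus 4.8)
The plan is to compute the expectation by linearity over the $m$ rows of $A \circ G$, and then evaluate each individual row's contribution using the fact that a fixed linear combination of independent standard Gaussians is again Gaussian. First I would write out
$$
\| (A \circ G) \vct{x} \|_1 = \sum_{j=1}^m \left| \sum_{k=1}^n a_{j,k} g_{j,k} x_k \right|,
$$
so that by linearity of expectation it suffices to compute $\mathbb{E}_G \left| \sum_{k=1}^n a_{j,k} g_{j,k} x_k \right|$ for each fixed row index $j$. Since the $g_{j,k}$ are i.i.d.\ standard Gaussian and the $a_{j,k} x_k$ are fixed scalars, the sum $Z_j := \sum_{k=1}^n a_{j,k} g_{j,k} x_k$ is a mean-zero Gaussian random variable with variance $\sum_{k=1}^n a_{j,k}^2 x_k^2 = \sum_{k=1}^n a_{j,k} x_k^2$ (using $a_{j,k} \in \{0,1\}$, so $a_{j,k}^2 = a_{j,k}$).

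The final step is the standard computation of the first absolute moment of a Gaussian: if $Z \sim \mathcal{N}(0, \sigma^2)$ then $\mathbb{E} |Z| = \sigma \sqrt{2/\pi}$ (this is the mean of a half-normal distribution, obtained by direct integration of $\frac{2}{\sigma\sqrt{2\pi}} \int_0^\infty t e^{-t^2/(2\sigma^2)}\, dt$). Applying this with $\sigma_j = \left( \sum_{k=1}^n a_{j,k} x_k^2 \right)^{1/2}$ and summing over $j$ yields
$$
\mathbb{E}_G \| (A \circ G) \vct{x} \|_1 = \sum_{j=1}^m \sqrt{\frac{2}{\pi}} \left( \sum_{k=1}^n a_{j,k} x_k^2 \right)^{1/2} = \sqrt{\frac{2}{\pi}} \sum_{j=1}^m \left( \sum_{k=1}^n a_{j,k} x_k^2 \right)^{1/2},
$$
which is the claimed identity.

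There is essentially no obstacle here — the lemma is, as the authors say, ``straightforward.'' The only point requiring a word of care is the edge case where a row $j$ has $\sum_k a_{j,k} x_k^2 = 0$ (e.g.\ if $A$ has an all-zero row, or all nonzero entries of that row hit zero coordinates of $\vct{x}$); then $Z_j = 0$ almost surely and its contribution is $0$, consistent with the formula since $\sigma_j = 0$. Everything else is routine linearity of expectation plus the first-absolute-moment formula for a Gaussian, so the proof is just a couple of lines.
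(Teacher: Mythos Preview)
Your proposal is correct and follows essentially the same argument as the paper: identify each row's contribution as a mean-zero Gaussian with variance $\sum_k a_{j,k} x_k^2$, apply the half-normal mean formula $\mathbb{E}|Y_j| = \sqrt{2/\pi}\,\sigma_j$, and sum over $j$. Your explicit remark that $a_{j,k}^2 = a_{j,k}$ and your handling of the degenerate $\sigma_j = 0$ case are minor clarifications beyond what the paper writes, but the route is identical.
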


\begin{proof}
The $j$th coordinate of $(A \circ G) \vct{x} \in \mathbb{R}^m$ can be written as $Y_j = \sum_{k=1}^n a_{j,k} g_{j,k} x_k;$
this is a mean-zero Gaussian random variable with variance $\sigma_j^2 = \sum_{k=1}^n a_{j,k} x_k^2$.  It follows that the random variable $|Y_j|$ is a half-normal random variable; it has mean $\mathbb{E} ( |Y_j | ) =  \sqrt{\frac{2}{\pi}} \sigma_j = \sqrt{\frac{2}{\pi}} \left( \sum_{k=1}^n a_{j,k} x_k^2 \right)^{1/2}$.  The lemma follows by noting that $\mathbb{E}_{G} \| (A \circ G) \vct{x} \|_1= \sum_{j=1}^m \mathbb{E}(| Y_j |)$. 
\end{proof}

We now show that if $A \circ G$ is a random matrix as above, and if $A$ has exactly $d$ ones per column, then $\| (A \circ G) \vct{x} \|_1$ exhibits subgaussian concentrates around its expectation $\mathbb{E}_{G} \| (A \circ G)  \vct{x} \|_1$.  To do this, we will need the following lemma, whose proof uses standard arguments and can be found in the appendix. 

\begin{lemma}
\label{concentrate}
Fix $d, m \in \mathbb{N}$ and $\alpha > 0$.  Suppose that $Y_i, \hspace{.5mm} i = 1,2, \dots, m,$ are independent mean-zero Gaussian random variables with variances $\sigma^2_i$ satisfying $\sum_{i=1}^m \sigma_i^2 = d \alpha^2$. Then the random variable $Z = \frac{1}{d} \sum_i ( |Y_j|  - \mathbb{E} |Y_j | )$ satisfies 
$$
\emph{Prob} \left[ |Z| \geq \lambda \right]  \leq 2 \exp{\left(-\frac{\lambda^2 d}{2\alpha^2} \right)}, \quad \quad \forall \lambda \geq 0.
$$
\end{lemma}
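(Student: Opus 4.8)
The plan is to reduce the statement to a scalar concentration inequality for a sum of independent, appropriately scaled half-normal random variables, and then apply a standard subgaussian/Bernstein-type tail bound. First I would set $W_i = |Y_i| - \mathbb{E}|Y_i|$, so that $Z = \frac{1}{d}\sum_{i=1}^m W_i$ is a sum of independent, mean-zero random variables. The key structural fact is that $|Y_i|$ is a half-normal random variable with scale $\sigma_i$, and half-normal random variables (being bounded transformations — absolute values — of Gaussians) are subgaussian: concretely, $|Y_i| - \mathbb{E}|Y_i|$ has subgaussian norm on the order of $\sigma_i$. I would make this precise either by invoking the Gaussian concentration inequality (the map $g \mapsto |g|$ is $1$-Lipschitz, so $|Y_i|$ concentrates about its mean with Gaussian tails at scale $\sigma_i$), or by directly estimating the moment generating function $\mathbb{E}\,e^{t W_i}$ and showing $\mathbb{E}\,e^{t W_i} \le e^{C t^2 \sigma_i^2}$ for all $t$.

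Once each summand $W_i$ is known to satisfy $\mathbb{E}\,e^{t W_i} \le e^{c t^2 \sigma_i^2}$, independence gives $\mathbb{E}\,e^{t \sum_i W_i} \le e^{c t^2 \sum_i \sigma_i^2} = e^{c t^2 d \alpha^2}$, using the hypothesis $\sum_i \sigma_i^2 = d\alpha^2$. This is precisely the MGF of a subgaussian variable with variance proxy $2 c d \alpha^2$, and then $Z = \frac1d \sum_i W_i$ is subgaussian with variance proxy $2c\alpha^2/d$. The Chernoff bound then yields $\mathrm{Prob}[|Z| \ge \lambda] \le 2 \exp(-\lambda^2 d/(2\alpha^2))$ provided the constant $c$ coming out of the half-normal MGF estimate is at most $1/4$; I would need to check that the sharp constant in the Gaussian Lipschitz concentration bound (which gives variance proxy $\sigma_i^2$ for $|Y_i|$, hence $c = 1/4$ in the $e^{ct^2\sigma_i^2}$ normalization) indeed delivers exactly the stated exponent.

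The main obstacle, and the only place where real care is needed, is pinning down the exact subgaussian constant for the centered half-normal. A black-box "subgaussian up to absolute constants" argument would only give the tail bound with an unspecified constant in the exponent, whereas the lemma asserts the clean constant $\tfrac{1}{2\alpha^2}$ inside the exponential. The cleanest route is the Gaussian concentration of measure: since $x \mapsto |x|$ is $1$-Lipschitz and $Y_i \sim \mathcal{N}(0,\sigma_i^2)$, the variable $|Y_i|$ satisfies $\mathbb{E}\exp(t(|Y_i| - \mathbb{E}|Y_i|)) \le \exp(t^2\sigma_i^2/2)$ for all $t \in \mathbb{R}$ — this is the standard MGF form of Gaussian Lipschitz concentration with the sharp constant. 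Multiplying over $i$, substituting $\sum_i \sigma_i^2 = d\alpha^2$, dividing by $d$, and optimizing the Chernoff bound over $t$ then produces exactly $2\exp(-\lambda^2 d/(2\alpha^2))$. I would present the Gaussian-concentration MGF bound as the one cited external ingredient and let the rest follow by the routine tensorization-and-Chernoff computation.
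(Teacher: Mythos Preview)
Your proposal is correct and follows essentially the same route as the paper: establish the scalar MGF bound $\mathbb{E}\exp\bigl(t(|Y_i|-\mathbb{E}|Y_i|)\bigr)\le \exp(t^2\sigma_i^2/2)$, tensorize using independence and $\sum_i\sigma_i^2=d\alpha^2$, and apply the Chernoff bound. The only difference is in how that scalar MGF bound is justified---you invoke the sharp Gaussian Lipschitz concentration inequality for the $1$-Lipschitz map $y\mapsto|y|$, whereas the paper derives the same inequality by a direct computation of the half-normal MGF followed by a short case analysis; your route is cleaner but relies on a stronger cited ingredient.
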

\noindent  Lemma \ref{concentrate} implies the following result about the concentration of $\| (A \circ G) \vct{x} \|_1$.
\begin{prop}
\label{main:concentrate}
Fix $\vct{x} \in \mathbb{R}^n$, and fix $A  = (a_{j,k}) \in \{0,1\}^{m \times n}$ populated with zeros and ones and having $d$ ones per  column.  Suppose that $G \in \mathbb{R}^{m \times n}$ consists of i.i.d. standard Gaussian entries, and consider the random matrix $\Phi = A \circ G$. Then 
$$
\mathbb{P} \left[ \frac{\left| \| \Phi \vct{x} \|_1 - \mathbb{E}_{G}\| \Phi \vct{x} \|_1 \right|}{d} \geq \lambda \right] \leq 2 \exp\left( -\frac{\lambda^2 d}{2 \| \vct{x} \|_2^2} \right), \quad \quad \forall \lambda \geq 0.
$$ 
\end{prop}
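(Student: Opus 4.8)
\textbf{Proof proposal for Proposition~\ref{main:concentrate}.}
The plan is to reduce the statement directly to Lemma~\ref{concentrate} by identifying the correct parameters. First I would write the $j$th coordinate of $\Phi \vct{x} = (A\circ G)\vct{x}$ as $Y_j = \sum_{k=1}^n a_{j,k} g_{j,k} x_k$, exactly as in the proof of Lemma~\ref{lemma:mean}. Since the rows of $G$ are mutually independent, the variables $Y_1,\dots,Y_m$ are independent, and each $Y_j$ is a mean-zero Gaussian with variance $\sigma_j^2 = \sum_{k=1}^n a_{j,k} x_k^2$. Moreover $\| \Phi \vct{x}\|_1 = \sum_{j=1}^m |Y_j|$, and by Lemma~\ref{lemma:mean} we have $\mathbb{E}_G \| \Phi \vct{x}\|_1 = \sum_{j=1}^m \mathbb{E}|Y_j|$, so the normalized deviation controlled in the proposition is precisely
$$
\frac{\| \Phi \vct{x}\|_1 - \mathbb{E}_G \| \Phi \vct{x}\|_1}{d} \;=\; \frac{1}{d}\sum_{j=1}^m \bigl(|Y_j| - \mathbb{E}|Y_j|\bigr) \;=\; Z,
$$
in the notation of Lemma~\ref{concentrate}.

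The only computation to carry out is the total-variance bookkeeping. Swapping the order of summation and using the defining property that every column of $A$ contains exactly $d$ ones,
$$
\sum_{j=1}^m \sigma_j^2 \;=\; \sum_{j=1}^m \sum_{k=1}^n a_{j,k} x_k^2 \;=\; \sum_{k=1}^n x_k^2 \sum_{j=1}^m a_{j,k} \;=\; d\sum_{k=1}^n x_k^2 \;=\; d\,\| \vct{x}\|_2^2.
$$
Thus the hypothesis $\sum_{i=1}^m \sigma_i^2 = d\alpha^2$ of Lemma~\ref{concentrate} holds with $\alpha^2 = \| \vct{x}\|_2^2$, and applying the lemma to $Z$ yields $\mathbb{P}[|Z|\ge \lambda] \le 2\exp\!\bigl(-\lambda^2 d/(2\|\vct{x}\|_2^2)\bigr)$ for all $\lambda \ge 0$, which is exactly the claimed bound.

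At this level of the argument there is essentially no obstacle: the proposition is a packaging step, and no constants are lost in it because the normalizations in Lemma~\ref{concentrate} were chosen to match. The genuine work sits inside Lemma~\ref{concentrate} (whose proof is deferred to the appendix): one must show that a $1/d$-normalized sum of independent centered half-normal variables with prescribed total variance $d\alpha^2$ has subgaussian tails. The natural route there is to observe that each $|Y_j| - \mathbb{E}|Y_j|$ is a mean-zero variable that is subgaussian with proxy variance a constant multiple of $\sigma_j^2$ (a half-normal of scale $\sigma_j$ is subgaussian), so the sum is subgaussian with proxy variance proportional to $\sum_j \sigma_j^2 = d\alpha^2$; dividing by $d$ produces proxy variance $\propto \alpha^2/d$, whence the tail $2\exp(-\lambda^2 d/(2\alpha^2))$ after tracking the sharp constant. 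That constant-tracking for half-normals is the one place a careful estimate (rather than bookkeeping) is required, but it is routine.
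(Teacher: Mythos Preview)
Your proposal is correct and matches the paper's own proof essentially line for line: write $\|\Phi\vct{x}\|_1 = \sum_j |Y_j|$ with $Y_j$ independent Gaussians of variance $\sigma_j^2 = \sum_k a_{j,k}x_k^2$, compute $\sum_j \sigma_j^2 = d\|\vct{x}\|_2^2$ from the column-sum constraint, and invoke Lemma~\ref{concentrate} with $\alpha = \|\vct{x}\|_2$. Your added remarks on the content of Lemma~\ref{concentrate} are accurate but not needed for this proposition, which (as you rightly say) is purely a packaging step.
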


\begin{proof}
Recall via Lemma \ref{lemma:mean} that $
\| \Phi x \|_1 = \sum_i | Y_i |$
where $Y_i \sim {\cal N}(0, \sigma_i^2)$ are independent Gaussian random variables with variances $\sigma_j^2 = \sum_{k=1}^n a_{j,k} x_k^2$.  Moreover, $\sum_j \sigma_j^2 = d \| \vct{x} \|_2^2$ thanks to $A \in \{0,1\}^{m \times N}$ having exactly $d$ ones per column. Applying Lemma \ref{concentrate} with $\alpha = \| \vct{x} \|_2$ gives the stated result.
\end{proof}
%

{\subsection{Analytic ingredients}\label{sec:upper}}

\noindent Recall that $G \in \mathbb{R}^{m \times n}$ is a Gaussian random matrix.  We first show that for any binary matrix $A_s \in \{0,1\}^{m \times n}$ having $d = m/s$ ones per column, it holds that $\mathbb{E}_{G} \|(A_s \circ G) \vct{x} \|_1 \leq  1.63 d\sqrt{\frac{2}{\pi}} \| \vct{x} \|_{1,2,s}.$ 

\begin{prop}
\label{exp:upper}
Fix $\vct{x} \in \mathbb{R}^n$ with $s$-block decreasing rearrangement $(\vct{x}_{S_1}, \vct{x}_{S_2}, \dots )$.   Fix $A= (a_{j,k}) \in \{0,1\}^{m \times n}$ having $d = m/s$ ones per column.  Then 
$$
 \sum_{j=1}^m \left( \sum_{k=1}^n a_{j,k} x_{k}^2  \right)^{1/2} \leq  1.63 d \| \vct{x} \|_{1,2,s}.
$$
\end{prop}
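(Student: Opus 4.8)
The plan is to route the estimate through the $\mathcal{K}$-interpolation functional $\mathcal{K}(\vct{x},\sqrt{s}) = \|\vct{x}_{S_1}\|_1 + \sqrt{s}\,\|\vct{x}_{S_1^c}\|_2$ from the Remark above, where $S_1$ indexes the $s$ largest entries of $\vct{x}$ in magnitude and $S_1^c = [n]\setminus S_1$. I would establish the two inequalities
$$\sum_{j=1}^m\Big(\sum_{k=1}^n a_{j,k}x_k^2\Big)^{1/2}\ \le\ d\,\mathcal{K}(\vct{x},\sqrt{s})\qquad\text{and}\qquad \mathcal{K}(\vct{x},\sqrt{s})\ \le\ 1.63\,\|\vct{x}\|_{1,2,s},$$
and compose them. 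The companion bound $\|\vct{x}\|_{1,2,s}\le\mathcal{K}(\vct{x},\sqrt{s})$ claimed in the Remark drops out of the same setup: since $|S_\ell|\le s$ one has $\|\vct{x}_{S_\ell}\|_2\ge\|\vct{x}_{S_\ell}\|_1/\sqrt{s}$, hence $s\|\vct{x}_{S_1^c}\|_2^2\ge\sum_{\ell\ge2}\|\vct{x}_{S_\ell}\|_1^2$ and therefore $\mathcal{K}(\vct{x},\sqrt{s})^2\ge\|\vct{x}_{S_1}\|_1^2+s\|\vct{x}_{S_1^c}\|_2^2\ge\|\vct{x}\|_{1,2,s}^2$.

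For the first inequality, write $T_j=\{k\in[n]:a_{j,k}=1\}$, so the $j$-th summand equals $\|\vct{x}_{T_j}\|_2$. Splitting $T_j=(T_j\cap S_1)\cup(T_j\cap S_1^c)$ and using $\sqrt{a+b}\le\sqrt{a}+\sqrt{b}$ together with $\|\cdot\|_2\le\|\cdot\|_1$,
$$\|\vct{x}_{T_j}\|_2\ \le\ \|\vct{x}_{T_j\cap S_1}\|_2+\|\vct{x}_{T_j\cap S_1^c}\|_2\ \le\ \|\vct{x}_{T_j\cap S_1}\|_1+\|\vct{x}_{T_j\cap S_1^c}\|_2.$$
Since $A$ has exactly $d$ ones per column, summing over $j$ gives $\sum_{j=1}^m\|\vct{x}_{T_j\cap S_1}\|_1=d\,\|\vct{x}_{S_1}\|_1$ and $\sum_{j=1}^m\|\vct{x}_{T_j\cap S_1^c}\|_2^2=d\,\|\vct{x}_{S_1^c}\|_2^2$; then by Cauchy--Schwarz over the $m$ rows, and using $m=ds$,
$$\sum_{j=1}^m\|\vct{x}_{T_j\cap S_1^c}\|_2\ \le\ \sqrt{m}\Big(\sum_{j=1}^m\|\vct{x}_{T_j\cap S_1^c}\|_2^2\Big)^{1/2}\ =\ \sqrt{md}\,\|\vct{x}_{S_1^c}\|_2\ =\ d\sqrt{s}\,\|\vct{x}_{S_1^c}\|_2.$$
Adding the two contributions produces exactly $d\,\mathcal{K}(\vct{x},\sqrt{s})$; note this step uses only that each column of $A$ has $d$ ones.

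For the analytic inequality, set $b_\ell=\|\vct{x}_{S_\ell}\|_1$ for the blocks of the $s$-block decreasing rearrangement and $N=\|\vct{x}\|_{1,2,s}=(\sum_\ell b_\ell^2)^{1/2}$. For $\ell\ge2$ every coordinate of $\vct{x}_{S_\ell}$ is at most the smallest coordinate of the full block $S_{\ell-1}$, hence at most the average $b_{\ell-1}/s$; therefore $\|\vct{x}_{S_\ell}\|_2^2\le\|\vct{x}_{S_\ell}\|_\infty\|\vct{x}_{S_\ell}\|_1\le b_{\ell-1}b_\ell/s$, and summing,
$$s\,\|\vct{x}_{S_1^c}\|_2^2\ \le\ \sum_{\ell\ge2}b_{\ell-1}b_\ell\ \le\ \Big(\sum_{\ell\ge1}b_\ell^2\Big)^{1/2}\Big(\sum_{\ell\ge2}b_\ell^2\Big)^{1/2}\ =\ N\sqrt{N^2-b_1^2},$$
the middle step being Cauchy--Schwarz applied to the consecutive-pair sum. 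Hence $\mathcal{K}(\vct{x},\sqrt{s})\le b_1+N^{1/2}(N^2-b_1^2)^{1/4}$, and with $u=b_1/N\in[0,1]$ the right-hand side is $N\big(u+(1-u^2)^{1/4}\big)$; a one-variable calculation gives $\max_{u\in[0,1]}\big(u+(1-u^2)^{1/4}\big)<1.63$ (the maximizer is $u\approx0.83$, with value $\approx1.58$), so $\mathcal{K}(\vct{x},\sqrt{s})\le1.63\,N$, and composing with the first inequality finishes the proof.

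The routine parts are the counting identities, the triangle inequality, and Cauchy--Schwarz; the one place needing care is the final scalar optimization, which I expect to be the main obstacle. The estimate must exploit the ``shift'' structure of $\sum_{\ell\ge2}b_{\ell-1}b_\ell$ (equivalently, that equality in the relevant Cauchy--Schwarz step forces a geometric profile $b_\ell\propto\rho^{\ell}$): the cruder bound $\sum_{\ell\ge2}b_{\ell-1}b_\ell\le\sum_{\ell\ge2}\tfrac12(b_{\ell-1}^2+b_\ell^2)\le N^2-\tfrac12 b_1^2$ only yields the constant $1+1/\sqrt{2}\approx1.71$, which is too large, so some sharpening along these lines is unavoidable in order to land below $1.63$.
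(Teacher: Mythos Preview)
Your proof is correct and, for the first inequality $\sum_j(\sum_k a_{j,k}x_k^2)^{1/2}\le d\,\mathcal{K}(\vct{x},\sqrt{s})$, it coincides exactly with the paper's argument (the paper's \eqref{inequalities}--\eqref{knorm}). The analytic step $\mathcal{K}(\vct{x},\sqrt{s})\le 1.63\,\|\vct{x}\|_{1,2,s}$, however, proceeds differently. The paper invokes an external ``norm inequality lemma'' of Cai (Lemma~\ref{littlelemma}), which gives the sharper estimate $\|\vct{x}_{S_\ell}\|_2\le \|\vct{x}_{S_\ell}\|_1/\sqrt{s}+\tfrac{\sqrt{s}}{4}(\|\vct{x}_{S_\ell}\|_\infty-\min_{k\in S_\ell}|x_k|)$; squaring and telescoping the $\|\vct{x}_{S_\ell}\|_\infty-\|\vct{x}_{S_{\ell+1}}\|_\infty$ terms leads after some algebra to the constant $\sqrt{21/8}\approx 1.62$. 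You instead use the elementary bound $\|\vct{x}_{S_\ell}\|_2^2\le\|\vct{x}_{S_\ell}\|_\infty\|\vct{x}_{S_\ell}\|_1\le b_{\ell-1}b_\ell/s$, then Cauchy--Schwarz on the shifted sum $\sum_{\ell\ge 2}b_{\ell-1}b_\ell\le N\sqrt{N^2-b_1^2}$, reducing everything to the scalar maximization of $u+(1-u^2)^{1/4}$, whose value is $\approx 1.58$. Your route is more self-contained (no external lemma) and in fact delivers a slightly better constant; the paper's route, on the other hand, makes the telescoping structure explicit and is perhaps more suggestive of where tightness can occur (cf.\ the discussion in Section~5). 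Both are valid and land comfortably below $1.63$.
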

For the proof, we will use the following \emph{norm inequality lemma}, which was introduced in \cite{cai2010new}.
\begin{lemma}[From \cite{cai2010new}]
\label{littlelemma}
For any $\vct{x} \in \mathbb{R}^k$,
$$
\| \vct{x} \|_2 \leq \frac{ \| \vct{x} \|_1}{\sqrt{k}} + \frac{\sqrt{k}}{4} \left( \max_{1 \leq i \leq k} |x_i| - \min_{1 \leq i \leq k} |x_i| \right).
$$
\end{lemma}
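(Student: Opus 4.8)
The plan is to reduce the stated inequality to a one-dimensional fact together with the elementary bound $t^2\le t$ on $[0,1]$. Since each of the three quantities $\|\vct{x}\|_2$, $\|\vct{x}\|_1$, and $\max_i|x_i|-\min_i|x_i|$ depends on $\vct{x}$ only through the magnitudes $|x_i|$, I would first assume without loss of generality that every coordinate is nonnegative. Write $M=\max_i x_i$ and $\mu=\min_i x_i$, and set $r=M-\mu\ge 0$; note $\mu\ge 0$. If $r=0$ then all coordinates equal $\mu$, so $\|\vct{x}\|_2=\sqrt{k}\,\mu=\|\vct{x}\|_1/\sqrt{k}$ and the claim holds trivially, so I may assume $r>0$.

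The decisive step is the affine normalization $y_i=(x_i-\mu)/r\in[0,1]$, i.e. $\vct{x}=\mu\vct{1}+r\vct{y}$ with $\vct{y}\in[0,1]^k$. By the triangle inequality (using $\mu,r\ge 0$), $\|\vct{x}\|_2\le \mu\|\vct{1}\|_2+r\|\vct{y}\|_2=\mu\sqrt{k}+r\|\vct{y}\|_2$, while nonnegativity of the entries gives the exact identity $\|\vct{x}\|_1/\sqrt{k}=(\mu k+r\|\vct{y}\|_1)/\sqrt{k}=\mu\sqrt{k}+r\|\vct{y}\|_1/\sqrt{k}$. Subtracting, the entire statement reduces to the single scale-free inequality
\[
\|\vct{y}\|_2\le \frac{\|\vct{y}\|_1}{\sqrt{k}}+\frac{\sqrt{k}}{4}\qquad\text{for all }\vct{y}\in[0,1]^k .
\]
For this core bound I would argue in two short steps. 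First, $0\le y_i\le 1$ forces $y_i^2\le y_i$, so $\|\vct{y}\|_2^2=\sum_i y_i^2\le\sum_i y_i=\|\vct{y}\|_1$, giving $\|\vct{y}\|_2\le\sqrt{\|\vct{y}\|_1}$. Second, writing $u=\|\vct{y}\|_1\ge 0$, the AM--GM inequality applied to the nonnegative numbers $u/\sqrt{k}$ and $\sqrt{k}/4$ yields $\sqrt{u}=2\sqrt{(u/\sqrt{k})(\sqrt{k}/4)}\le u/\sqrt{k}+\sqrt{k}/4$; equivalently $u/\sqrt{k}+\sqrt{k}/4-\sqrt{u}=(\sqrt{u}-\sqrt{k}/2)^2/\sqrt{k}\ge 0$. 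Chaining these two bounds closes the argument and recovers exactly the constant $\tfrac14$.

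Each individual step is elementary, so the only real issue is choosing a reduction sharp enough to produce the exact constant $\tfrac14$ rather than a weaker one. This is precisely why I favor the affine map onto $[0,1]^k$: the resulting core inequality is itself tight (equality when a quarter of the $y_i$ equal $1$ and the rest vanish, matching $u=k/4$ in the AM--GM step and $y_i^2=y_i$ on $\{0,1\}$), so the reduction transfers the optimal constant directly. I expect a cruder route, such as bounding the empirical variance by Popoviciu's inequality $\sum_i(x_i-\bar x)^2\le k r^2/4$ and feeding that into $\|\vct{x}\|_2^2=k\bar x^2+\sum_i(x_i-\bar x)^2$, to lose the factor, since it ignores the favorable coupling between the mean and the spread; identifying the normalization that avoids this loss is the main conceptual point.
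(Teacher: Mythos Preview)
Your proof is correct. The paper itself does not give a proof of this lemma: it is simply quoted from \cite{cai2010new} and used as a black box, so there is no ``paper's own proof'' to compare against. Your affine reduction to $[0,1]^k$ followed by $y_i^2\le y_i$ and the AM--GM step $\sqrt{u}\le u/\sqrt{k}+\sqrt{k}/4$ is clean and recovers the sharp constant $\tfrac14$; the only slack introduced is the triangle inequality $\|\mu\vct{1}+r\vct{y}\|_2\le \mu\sqrt{k}+r\|\vct{y}\|_2$, which is harmless since the overall inequality is already tight for vectors with $\mu=0$.
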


\begin{proof}[Proof of Proposition \ref{exp:upper}]
We have
\begin{align}
 \sum_{j=1}^m \left( \sum_{k=1}^n a_{j,k} x_{k}^2  \right)^{1/2} &=  \sum_{j=1}^m \left( \sum_{k \in S_1} a_{j,k} x_{k}^2 + \sum_{\ell=2}^{R = \lceil{n/s\rceil}}  \sum_{k \in S_{\ell} }  a_{j,k} x_{k}^2 \right)^{1/2} \nonumber \\
&\leq \sum_{j=1}^m \left( \sum_{k \in S_1} a_{j,k} x_{k}^2 \right)^{1/2} +\sum_{j=1}^m \left( \sum_{\ell=2}^R  \sum_{k \in S_{\ell} }  a_{j,k} x_{k}^2 \right)^{1/2} \nonumber \\
&\leq  \sum_{j=1}^m \sum_{k \in S_1} a_{j,k} |x_{k}| +\sum_{j=1}^m \left( \sum_{\ell=2}^R  \sum_{k \in S_{\ell} }  a_{j,k} x_{k}^2 \right)^{1/2} \nonumber \\
&= d \| \vct{x}_{S_1} \|_1 +\sum_{j=1}^m \left( \sum_{\ell=2}^R  \sum_{k \in S_{\ell} }  a_{j,k} x_{k}^2 \right)^{1/2}  \label{inequalities}
\end{align}
Applying H\"{o}lder's inequality to the second term and using again that $A$ has $d = m/s$ ones per column,
\begin{align}
\label{knorm}
\sum_{j=1}^m \left( \sum_{\ell=2}^R  \sum_{k \in S_{\ell} }  a_{j,k} x_{k}^2 \right)^{1/2} &\leq \sqrt{m} \left(\sum_{j=1}^m \sum_{\ell=2}^R  \sum_{k \in S_{\ell} }  a_{j,k} x_{k}^2 \right)^{1/2} =  d\sqrt{s} \left( \sum_{\ell=2}^R \sum_{k \in S_{\ell}} x_k^2 \right)^{1/2}. 
\end{align}
Applying the norm inequality (Lemma \ref{littlelemma}), we have, for each block $\ell \geq 2$,
\begin{align}
 \sum_{k \in S_{\ell} }  x_{k}^2 &\leq \frac{ \| \vct{x}_{S_{\ell}} \|^2_1}{s} + \frac{s}{16}\| \vct{x}_{S_{\ell}}\|_{\infty}^2 +  \frac{1}{2} \| \vct{x}_{S_{\ell}} \|_1(\| \vct{x}_{S_{\ell}}\|_{\infty} - \| \vct{x}_{S_{\ell+1}} \|_{\infty}) \nonumber \\
  &\leq \frac{ \| \vct{x}_{S_{\ell}} \|^2_1}{s} + \frac{\| \vct{x}_{S_{\ell-1}} \|_1^2}{16s } + \frac{1}{2} \| \vct{x}_{S_{2}} \|_1(\| \vct{x}_{S_{\ell}}\|_{\infty} - \| \vct{x}_{S_{\ell+1}} \|_{\infty}). \nonumber
\end{align}
Returning to the string of inequalities \eqref{inequalities}, we continue
\begin{align}
 \sum_{j=1}^m \left( \sum_{k=1}^n a_{j,k} x_{k}^2  \right)^{1/2} &\leq  d \| \vct{x}_{S_1} \|_1 + \sqrt{m d} \left( \sum_{\ell=2}^R \sum_{k \in S_{\ell}} x_k^2 \right)^{1/2}  \nonumber \\
&\leq d \| \vct{x}_{S_1} \|_1 + \sqrt{m d} \sqrt{   \sum_{\ell=2}^R\left(  \frac{ \| \vct{x}_{S_{\ell}} \|^2_1}{s} + \frac{\| \vct{x}_{S_{\ell-1}} \|_1^2}{16s } +  \| \vct{x}_{S_{2}} \|_1 \left(\frac{\| \vct{x}_{S_{\ell}}\|_{\infty}}{2} -  \frac{\| \vct{x}_{S_{\ell+1}} \|_{\infty}}{2} \right) \right)} \nonumber \\
&\leq d \| \vct{x}_{S_1} \|_1 + d  \sqrt{   \sum_{\ell=2}^R\left( \| \vct{x}_{S_{\ell}} \|^2_1 + \frac{\| \vct{x}_{S_{\ell-1}} \|_1^2}{16} \right) +  \frac{ s \| \vct{x}_{S_{2}} \|_1\| \vct{x}_{S_{2}}\|_{\infty}}{2} } \nonumber \\
&\leq d \| \vct{x}_{S_1} \|_1 +d \sqrt{  \frac{17}{16} \sum_{\ell=2}^R \| \vct{x}_{S_{\ell}} \|^2_1 + \frac{1}{16}  \| \vct{x}_{S_{1}} \|^2_1+ \frac{1}{2}  \| \vct{x}_{S_{1}} \|_1 \| \vct{x}_{S_{2}} \|_1} \nonumber \\
&\leq d \| \vct{x}_{S_1} \|_1 +d \sqrt{  \frac{17}{16} \sum_{\ell=2}^R \| \vct{x}_{S_{\ell}} \|^2_1 + \frac{5}{16}  \| \vct{x}_{S_{1}} \|^2_1+ \frac{1}{4} \|\vct{x}_{S_{2}} \|_1^2} \nonumber \\
&\leq d \| \vct{x}_{S_1} \|_1 +d \sqrt{  \frac{21}{16} \sum_{\ell=2}^R \| \vct{x}_{S_{\ell}} \|^2_1 + \frac{5}{16}  \| \vct{x}_{S_{1}} \|^2_1} \nonumber \\
&\leq \sqrt{2.625}d \sqrt{\sum_{\ell=1}^R \| \vct{x}_{S_{\ell}} \|_1^2}, \label{eq:revtri} \end{align}
where the last inequality follows by applying $|a| + |b| \leq \sqrt{2}\sqrt{a^2 + b^2}$ and the proposition follows by taking the bound $\sqrt{2.625} \leq 1.63$.
\end{proof}

\begin{remark}
\emph{
Note that \eqref{inequalities} and \eqref{knorm} imply that
\begin{align}
\frac{1}{d} \sum_{j=1}^m \left( \sum_{k=1}^n a_{j,k} x_{k}^2  \right)^{1/2} &\leq \| \x_{S_1} \|_1 + \sqrt{s} \| \x_{S_1^c} \|_2 \nonumber \\
&:= {\cal K}(\x, \sqrt{s})
\end{align}
where ${\cal K}(\x, t)$ is the interpolation norm defined in Remark 3.  Given that $\| \vct{x} \|_{1,2,s} \leq {\cal K}(\x ,\sqrt{s})$, a byproduct of the proof of Proposition \ref{exp:upper} is that the two interpolation norms are equivalent up to a factor of 1.63: 
\begin{equation}
\| \vct{x} \|_{1,2,s}  \leq {\cal K}(\x,\sqrt{s}) \leq 1.63 \| \vct{x} \|_{1,2,s} 
\end{equation}
}
\end{remark}

{\subsection{Combinatorial ingredients}\label{sec:balls}}

\noindent Consider $A_s \in \{0,1\}^{m \times n}$ a random binary matrix having $d = m/s$ ones per column.  In this section we show that with high probability with respect to the realization of such a matrix, $\| A_s \vct{x} \|_1 \approx \| \vct{x} \|_1$ for all $s$-sparse $\vct{x} \in \mathbb{R}^n$.   We begin by showing such concentration holds for a fixed $s$-sparse $\vct{x} \in \mathbb{R}^n$.

\bigskip

\begin{prop}
\label{ballsinbins}
Fix $n,s,d \in \mathbb{N}$ and set $m = d s$.   Fix a subset $S \subset [n]$ of cardinality $| S | = s$ and an ordering $\pi_1, \pi_2, \dots, \pi_s$ of the indices in $S$.
Draw an $m \times n$ binary random matrix $A = (a_{j,k})$ with $d = m/s$ ones per column as follows: for each column $k \in [n]$, draw $d$ elements $\{j_1, j_2, \dots, j_d\}$ from $[m]$ uniformly without replacement, and set $a_{j_{\ell}, k} = 1$.  For each $\varepsilon > 0$ it holds with probability exceeding $1 -  2 s \exp(-\varepsilon^2 m/2)$ that,  for any $s$-sparse $\vct{z} \in \mathbb{R}^n$ supported on $S$ and satisfying $|z_{\pi_1}| \geq |z_{\pi_2}| \geq \dots \geq |z_{\pi_s}|$,
\begin{equation}
\label{babyexpand1}
d(1- 2e^{-1} -2\varepsilon) \| \vct{z} \|_1 \leq \| A \vct{z} \|_1 \leq d \| \vct{z} \|_1
\end{equation}
and
\begin{equation*}
d(1 -e^{-1} - \varepsilon) \| \vct{z} \|_1 \leq \sum_{j=1}^m \left( \sum_{k = 1}^n a_{j,k} z_k^2 \right)^{1/2}.
\end{equation*}
\end{prop}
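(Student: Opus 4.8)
The plan is to split the proof into a deterministic part and a probabilistic ``balls into bins'' part. The upper bound is deterministic and trivial: since every column of $A$ has exactly $d$ ones, the triangle inequality gives $\|A\vct z\|_1=\sum_{j=1}^m\bigl|\sum_k a_{j,k}z_k\bigr|\le\sum_k|z_k|\sum_j a_{j,k}=d\|\vct z\|_1$. For the two lower bounds I would reveal the columns of $A$ indexed by $S$ in the prescribed order $\pi_1,\dots,\pi_s$ and keep track of ``collisions''. For a row $j$ hit by at least one of those columns, let $\iota(j)=\min\{i:a_{j,\pi_i}=1\}$ be its first hitter; let $C_i$ be the number of rows hit by $\pi_i$ whose first hitter has index $<i$; and write $T_t=\{\pi_1,\dots,\pi_t\}$. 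Then $\sum_{i\le t}C_i=td-N(T_t)=:\mathrm{def}(T_t)$, the defect of $T_t$, where $N(T_t)$ is the number of rows hit by at least one of $\pi_1,\dots,\pi_t$. Since $|z_{\pi_{\iota(j)}}|$ is the largest coordinate of $\vct z$ that contributes to row $j$, applying the reverse triangle inequality row by row and then collecting terms column by column yields the two deterministic estimates
\begin{equation*}
\sum_{j=1}^m\Bigl(\sum_{k=1}^n a_{j,k}z_k^2\Bigr)^{1/2}\ \ge\ d\|\vct z\|_1-\sum_{i=1}^s C_i|z_{\pi_i}|,\qquad
\|A\vct z\|_1\ \ge\ d\|\vct z\|_1-2\sum_{i=1}^s C_i|z_{\pi_i}|.
\end{equation*}
Using $|z_{\pi_1}|\ge\cdots\ge|z_{\pi_s}|\ge0$, summation by parts rewrites the loss term as $\sum_{i=1}^s C_i|z_{\pi_i}|=\sum_{t=1}^s\mathrm{def}(T_t)\bigl(|z_{\pi_t}|-|z_{\pi_{t+1}}|\bigr)$ (with $z_{\pi_{s+1}}:=0$). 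Hence everything reduces to showing that, with probability at least $1-2s\,e^{-\varepsilon^2 m/2}$, one has $\mathrm{def}(T_t)\le(e^{-1}+\varepsilon)\,dt$ for every $t\in[s]$: on that event $\sum_i C_i|z_{\pi_i}|\le(e^{-1}+\varepsilon)d\sum_t t(|z_{\pi_t}|-|z_{\pi_{t+1}}|)=(e^{-1}+\varepsilon)d\|\vct z\|_1$ simultaneously for \emph{all} admissible $\vct z$, and substituting into the two displays gives exactly the asserted lower bounds (the factor $2$ accounting for the $2e^{-1}$ in the $\ell_1$ bound).

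The remaining, probabilistic step is to control $\mathrm{def}(T_t)$. Here I would view the $td$ ones of columns $\pi_1,\dots,\pi_t$ as $td$ balls thrown into the $m=ds$ rows, each column independently occupying a uniform $d$-subset, so that $\mathrm{def}(T_t)=E_t-(m-td)$ with $E_t=m-N(T_t)=\sum_{j=1}^m\mathbf 1\{\text{row }j\text{ gets no ball from }T_t\}$. The ``row $j$ empty'' indicators are negatively associated (sampling without replacement within each column, independence across columns), so their sum $E_t$ obeys Hoeffding/Chernoff-type concentration. A direct computation gives $\E[E_t]=m(1-1/s)^t$, hence $\E[\mathrm{def}(T_t)]=m(1-1/s)^t-m+td\le ds\,(e^{-t/s}-1+t/s)\le e^{-1}dt$, using $(1-1/s)^t\le e^{-t/s}$ and the elementary inequality $e^{-u}-1+u\le e^{-1}u$ for $u\in[0,1]$ (concavity, equality at $u=1$ — this is the source of the constant $e^{-1}$). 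A tail bound for the negatively associated sum $E_t$, at a deviation on the order of $\varepsilon dt$, then gives a per-$t$ failure probability $\le e^{-\varepsilon^2 m/2}$, and a union bound over $t\in[s]$ produces the stated $2s\,e^{-\varepsilon^2 m/2}$.

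\textbf{The main difficulty} is establishing the defect bound \emph{uniformly in $t$}, and in particular making it sharp for small $t$, where the target $(e^{-1}+\varepsilon)dt$ is itself small and a crude Hoeffding bound on $E_t$ at deviation $\varepsilon dt$ is too weak. To handle that regime one has to use the finer structure of $\mathrm{def}(T_t)=\sum_{i\le t}C_i$: conditionally on the first $i-1$ columns, $C_i$ is hypergeometric and is stochastically dominated by $\mathrm{Hypergeom}(m,(i-1)d,d)$, whose mean is only $(i-1)d/s$; coupling these dominating variables to be independent and invoking a genuine large-deviation estimate for their sum controls the small-$t$ cases. Packaging all of these estimates so that the single clean tail $2s\,e^{-\varepsilon^2 m/2}$ covers every $t$ is the technical heart of the proposition, and is exactly where one borrows the combinatorial machinery of \cite{berinde2008combining} and \cite{fr13}. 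It is worth emphasizing that, unlike the lossless-expander regime treated in those works (where $m\gg ds$ and the defect is a vanishing fraction of $td$), here $m=ds$ forces $\mathrm{def}(T_t)$ to be a \emph{constant} fraction of $td$ already in expectation, so no amount of concentration can remove the loss; this is precisely why the lower constants in the proposition are $1-e^{-1}$ and $1-2e^{-1}$ rather than $1-\varepsilon$.
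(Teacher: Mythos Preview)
Your deterministic skeleton---the trivial upper bound, the two lower bounds $\|A\vct z\|_1\ge d\|\vct z\|_1-2\sum_iC_i|z_{\pi_i}|$ and $\sum_j(\sum_ka_{j,k}z_k^2)^{1/2}\ge d\|\vct z\|_1-\sum_iC_i|z_{\pi_i}|$ via the first-hitter decomposition, the Abel-summation identity $\sum_iC_i|z_{\pi_i}|=\sum_t\mathrm{def}(T_t)(|z_{\pi_t}|-|z_{\pi_{t+1}}|)$, and the origin of the constant $e^{-1}$ from $e^{-u}-1+u\le e^{-1}u$ on $[0,1]$---matches the paper's argument line for line. The concentration step is the only place you diverge. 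The paper does not use negative association or hypergeometric domination of the $C_i$; instead it first passes to a \emph{with-replacement} balls-into-bins model (each of the $td$ balls thrown i.i.d.\ uniformly into $m$ bins), invokes an off-the-shelf Azuma--Hoeffding inequality from \cite{prob2005} for the empty-bin fraction $q_t$ that carries a built-in variance factor $\sqrt{1-(\E q_t)^2}$, and then transfers back by observing that the without-replacement empty-bin fraction $\tilde q_t$ is stochastically dominated by $q_t$. That variance factor, of order $\sqrt{t/s}$, is the paper's device for the small-$t$ regime you correctly flag as the crux: it delivers $\tilde q_t\le e^{-t/s}+\varepsilon\sqrt{t/s}$ and hence $\mathrm{def}(T_t)\le e^{-1}dt+\varepsilon d\sqrt{st}$, which for $t<s$ is not literally your cleaner target $(e^{-1}+\varepsilon)dt$ (indeed the paper's subsequent passage to ``$|C_k|\le dk(e^{-1}+\varepsilon)$'' is a bit loose at exactly this point). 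Your hypergeometric-domination route is a perfectly reasonable alternative for tightening that step; either way, the overall architecture of the proof is the same.
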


\begin{proof}

As the columns of $A$ are independent random vectors, we can assume without loss that $S = \{1,2,\dots, s\},$ $(\pi_1, \pi_2, \dots, \pi_s) = (1,2, \dots, s)$, and $A = A_S \in \{0,1\}^{m \times s}$. 

\bigskip

For the moment, consider the modified probability distribution over random matrices $A$ where, for each column $k \in [s]$ we draw $d$ elements $\Lambda_k = \{j_1, j_2, \dots, j_d\}$ from $[m]$ uniformly \emph{with} replacement, and set $a_{j_{\ell}, k} = 1$ if $j_{\ell} \in \Lambda_k$ (and $a_{j,\ell} = 0$ otherwise).  Note that by sampling with replacement, there may be repetitions within $\Lambda_k$ and so the number of ones in any particular column may be smaller than $d$.  Still, the total number of draws with replacement is $m = d s$, $d$ draws per each of $s$ columns.   An equivalent way to describe this process is as throwing balls into bins: a total of $m$ balls are thrown i.i.d. into $m$ bins, $d = m/s$ per round for $s$ rounds; if at least one ball is thrown into bin $j$ during round $k$, then $a_{j,k} = 1$; otherwise, $a_{j,k} = 0$. 

\bigskip

\noindent Let $q_k$ be the fraction of the $m$ bins which remain empty after the first $k$ rounds of this process, that is, after the first $d k $ balls have been tossed.  As the probability that any particular bin is empty at this point is equal to $( 1 - \frac{1}{m} )^{d k}$, 
\begin{equation}
\label{q:exp}
\mathbb{E}(q_k) =  \left(1 - \frac{1}{m} \right)^{d k} \leq  \exp\left( - \frac{d k}{m} \right)  = \exp(-k /s).
\end{equation}
 Using the Azuma-Hoeffding inequality, it can be shown [\cite{prob2005},  12.19, p. 313] that the random variable $q_k$ concentrates around its mean according to
 $$
\mathbb{P}\left(  |q_k - \mathbb{E}(q_k) | \geq \varepsilon \sqrt{1 - [\mathbb{E}(q_k) ]^2} \right) \leq 2 \exp(-2 \varepsilon^2 m), \quad \quad \forall \varepsilon > 0.
 $$
Since $\mathbb{E}(q_k) \leq \exp(-k/s)$, this implies in particular that, for any $\varepsilon > 0$,
 \begin{align}
\mathbb{P} \left( q_k \geq \exp(-k/s) + (\varepsilon/\sqrt{2})\sqrt{k/s} \right) \nonumber &\leq  \mathbb{P} \left(  q_k \geq \exp(-k/s) + (\varepsilon/\sqrt{2})\sqrt{1 - \exp(-2k/s)}  \right) \nonumber \\
  &\leq  \mathbb{P} \left(  q_k \geq \mathbb{E}(q_k) + (\varepsilon/\sqrt{2})\sqrt{1 - [\mathbb{E}(q_k)]^2}  \right) \nonumber \\
  &\leq 2 \exp(-\varepsilon^2 m) \quad \quad . \nonumber
  \end{align}
  Taking a union bound over $k \in [s]$,
 \begin{equation}
 \label{union1}
 \mathbb{P} \left( \forall k \in [s]: \hspace{.5mm} q_k   \leq \exp(- k/s) + \varepsilon \sqrt{k / s} \right) \geq 1-  2 s \exp(-\varepsilon^2 m/2), \quad \quad \forall \varepsilon > 0.
 \end{equation}
Back to the setting where we draw $d$ elements $\{j_1, j_2, \dots, j_d\}$ from $[m]$ uniformly \emph{without} replacement to fill each column $k \in [s]$ of $A$, the fraction $\widetilde{q}_k$ of empty rows remaining after $k$ rounds will be even smaller. Specifically, it holds  
$$\mathbb{P} \left( \tilde{q}_k  \leq t \right)  \geq \mathbb{P} \left( q_k \leq t \right), \quad \quad \forall t \geq 0.$$
In turn,
  \begin{equation}
 \label{Noccur}
 \mathbb{P} \left(\forall k \in [s]: \hspace{.5mm} \tilde{q}_k   \leq \exp(- k/s) + \varepsilon \sqrt{k / s} \right) \geq 1 -  2 s \exp(-\varepsilon^2 m/2).
 \end{equation}
 We now assume that the realization of the random matrix $A$ yields, for each $k \in [s]$, $\tilde{q}_k   \leq \exp(- k/s) + \varepsilon \sqrt{k / s}.$  By the above, this occurs with probability exceeding $1 -  2 s \exp(-\varepsilon^2 m/2)$.  Continuing, let $f_k$ be the fraction over $m$ among the first $d k$ balls thrown which form a \emph{collision}, where a ball forms a collision if it lands in a bin which contains a ball thrown from a previous round.  Note that $f_k = \tilde{q}_k- (1-\frac{k}{s})$, the difference between the true fraction of empty bins and the fraction of bins that would be empty if there were no collisions.   It follows that
\begin{align}
f_k \leq \tilde{q}_k - 1 + \frac{k}{s} &\leq \exp(- k /s)  - 1 + k /s + \varepsilon \sqrt{k/s} \nonumber \\
 & \leq \frac{k}{s}e^{-1} +  \varepsilon \sqrt{k/s},
 \end{align}
the last inequality holding because $\exp{(-u)} - 1 + u \leq \exp{(-1)} u$ for $u \in [0,1]$. 

\bigskip

Recall that  $A \in \{0,1\}^{m \times s}$ is such that $a_{j,k} = 1$ if a ball is thrown into bin $j$ during the $k$th round, and $a_{j,k} = 0$ otherwise. For $k\in [s]$, let $E_k \subset [m] \times [k]$ denote the subset of $| E_k | \leq d k$ entries such that $a_{j,\ell} = 1$, and let $C_k \subset E_k$ denote the subset of those entries corresponding to collisions.  Then $|C_k| \leq d k f_k \leq dk (e^{-1}+ \varepsilon)$.  Recall now the assumption that $\vct{z} \in \mathbb{R}^s$ supported on $S = \{1,2,\dots, s\}$ is in decreasing rearrangement: $|z_1| \geq |z_2| \geq \dots \geq |z_s|$.  Observe that we can write
\begin{align}
\label{rhs_sum}
\sum_{(j,k) \in C_s} | z_k | \leq  \sum_{k=1}^s n_k |z_k|
\end{align}
where $n_k \geq 0$ satisfies $\sum_{\ell=1}^k n_{\ell} = | C_k|$.  Given the constraints  $|z_1| \geq |z_2| \geq \dots \geq |z_s|$ and $|C_k | \leq dk (e^{-1}+ \varepsilon)$, the RHS expression of \eqref{rhs_sum} is maximized by setting $n_k = d (e^{-1}+ \varepsilon)$.   Hence,  
\begin{equation}
\label{collisions}
\sum_{(j,k) \in C_s} | z_k | \leq d (e^{-1}+\varepsilon)\| \vct{z} \|_1.
\end{equation}
Since $A$ has $d$ ones per column by construction, 
$$\sum_{(j,k) \in C_s} |z_k| +\sum_{(j,k) \in E_s \setminus C_s} |z_k|   = \sum_{(j,k) \in E_s}  |z_k| = d \| \vct{z} \|_1.
$$
Combined with \eqref{collisions}, 
$$
\sum_{(j,k) \in E_s \setminus C_s} |z_k | \geq d(1- (e^{-1}+\varepsilon)) \| \vct{z} \|_1,
$$
and so 
$$
\| A \vct{z} \|_1 \geq \sum_{(j,k) \in E _s\setminus C_s} |z_k | - \sum_{(j,k) \in C_s} |z_k | \geq  d(1- 2(e^{-1}+\varepsilon)) \| \vct{z} \|_1
$$
and
\begin{equation}
\label{crazy}
\sum_{j=1}^m \left( \sum_{k=1}^s  a_{j,k} |z_k|^2 \right)^{1/2} \geq  \sum_{(j,k) \in E_s \setminus C_s} |z_k|  \geq d(1- (e^{-1}+\varepsilon)) \| \vct{z} \|_1.
\end{equation}
This finishes the proof.
\end{proof}
Applying a union bound over the ${n \choose s} \leq (n/s)^s$ subsets $S \subset [n]$ of size $|S| = s$, and over the $s! \leq s^s$ orderings of indices within any particular such subset $S$, Proposition \ref{ballsinbins} gives rise to a uniform result holding over all $s$-sparse vectors:

\begin{corollary}[Corollary to Proposition \ref{ballsinbins}]
\label{ballsinbins_cor}
Fix $n,s, \in \mathbb{N}$, $\xi \in (0,1)$, and $\varepsilon > 0$.  Fix $d \in \mathbb{N}$ satisfying
$$
d \geq 2\varepsilon^{-2} \log(n/\xi)
$$
and let $m = ds \geq 2s \varepsilon^{-2} \log(n/\xi)$.  Draw an $m \times n$ binary random matrix $A = (a_{j,k})$ with $d$ ones per column chosen uniformly without replacement.  With probability exceeding $1 -  \xi$ it holds that
\begin{equation}
\label{babyexpand}
d(1- 2e^{-1} -2\varepsilon) \| \vct{z} \|_1 \leq \| A \vct{z} \|_1 \leq d \| \vct{z} \|_1 \quad \quad \forall \vct{z} \in \mathbb{R}^n: \| \vct{z} \|_0 \leq s
\end{equation}
and 
\begin{equation*}
d(1 -e^{-1} - \varepsilon) \| \vct{z} \|_1 \leq \sum_{j=1}^m \left( \sum_{k = 1}^n a_{j,k} z_k^2 \right)^{1/2}  \quad \quad \forall \vct{z} \in \mathbb{R}^n: \| \vct{z} \|_0 \leq s
\end{equation*}
\end{corollary}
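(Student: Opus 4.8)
The plan is a straightforward union bound over supports and orderings that reduces the uniform statement to the fixed-vector statement of Proposition \ref{ballsinbins}. The crucial observation is that in Proposition \ref{ballsinbins} the ``good event'' of probability exceeding $1 - 2s\exp(-\varepsilon^2 m/2)$ is a property of the random matrix $A$ alone — namely the bounds $\tilde q_k \le \exp(-k/s) + \varepsilon\sqrt{k/s}$ on the empty-bin fractions associated to the columns indexed by $S$ taken in the order $\pi$ — and once this event holds, \emph{both} conclusions hold simultaneously for \emph{every} $s$-sparse $\vct{z}$ supported on $S$ whose nonzero magnitudes decrease along $\pi$. Hence the union bound need only range over the choice of support and of ordering, not over an infinite family of vectors.

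First I would observe that every $s$-sparse $\vct{z} \in \mathbb{R}^n$ is supported on some set $S \subseteq [n]$ of size exactly $s$ (enlarge the support by zero coordinates if $\|\vct{z}\|_0 < s$) and, after choosing an ordering $\pi_1, \dots, \pi_s$ of $S$ with $|z_{\pi_1}| \ge \cdots \ge |z_{\pi_s}|$ (trailing zeros last), falls within the scope of Proposition \ref{ballsinbins} for that pair $(S,\pi)$. There are $\binom{n}{s} \le (n/s)^s$ choices of $S$ and, for each, $s! \le s^s$ orderings, hence at most $(n/s)^s \cdot s^s = n^s$ pairs in total. Taking a union bound over all of them, the event that the conclusions of Proposition \ref{ballsinbins} fail for at least one $(S,\pi)$ has probability at most $n^s \cdot 2s\exp(-\varepsilon^2 m/2)$.

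It remains to check this is at most $\xi$ under the hypothesis $m = ds$ with $d \ge 2\varepsilon^{-2}\log(n/\xi)$. Indeed $\varepsilon^2 m/2 = \varepsilon^2 ds/2 \ge s\log(n/\xi)$, so $\exp(-\varepsilon^2 m/2) \le (\xi/n)^s$ and the failure probability is at most $2s\,\xi^s$, which is $\le \xi$ for $s \ge 2$ since $\xi \in (0,1)$; the harmless polynomial prefactor $2s$ at small $s$ is absorbed by taking the constant in front of $\varepsilon^{-2}\log(n/\xi)$ slightly larger than $2$, which does not change the order of magnitude. On the complementary event of probability exceeding $1-\xi$, the conclusions of Proposition \ref{ballsinbins} hold for every $(S,\pi)$, and therefore both \eqref{babyexpand} and the accompanying lower bound on $\sum_j(\sum_k a_{j,k} z_k^2)^{1/2}$ hold for all $s$-sparse $\vct{z} \in \mathbb{R}^n$ at once.

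I do not expect a genuine obstacle: the only point requiring care is recognizing that the randomness in Proposition \ref{ballsinbins} can be ``frozen'' so that a single realization of $A$ handles all vectors with a given support and ordering — this is what makes the union range over the $\le n^s$ combinatorial configurations rather than over vectors — and the remainder is bookkeeping with the count $\binom{n}{s}\,s! \le n^s$ and the absorption of polynomial factors into the exponent.
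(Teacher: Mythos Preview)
Your proposal is correct and follows exactly the paper's approach: a union bound over the $\binom{n}{s}\le (n/s)^s$ supports and the $s!\le s^s$ orderings of each support, reducing to the fixed-$(S,\pi)$ event of Proposition~\ref{ballsinbins}. Your observation that the good event in Proposition~\ref{ballsinbins} depends only on the combinatorial pair $(S,\pi)$ (via the empty-bin fractions $\tilde q_k$) and not on the particular vector $\vct z$ is precisely the point, and your bookkeeping---including the honest remark that the prefactor $2s$ forces the constant in $d\ge 2\varepsilon^{-2}\log(n/\xi)$ to be read as ``of order $2$'' rather than exactly $2$---is more careful than the paper's one-line sketch.
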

The inequalities in \eqref{babyexpand} imply that $A/d$ satisfies the 1-restricted isometry property \eqref{1RIP} of order $s$ and level $\theta = 2e^{-1} + 2\varepsilon$. Our probabilistic construction differs from existing constructions of 1-RIP matrices \cite{berinde2008combining, fr13} in that we use $m = sd$ rows but do not seek $\theta$ arbitrarily small, as opposed to using $m \geq \theta^{-1} s d$ rows to achieve 1-RIP for arbitrarily small $\theta > 0$.

\noindent Finally, we show that the results of Proposition \ref{ballsinbins} holding on each of the blocks $\vct{x}_{S_{\ell}}$ in the block-decreasing rearrangement of a vector $\vct{x}$ implies a lower bound on $\mathbb{E}_G \| (A_s \circ G) \vct{x} \|_1$ in terms of  $\| \vct{x} \|_{1,2,s}$.

\begin{prop}
\label{adjacency}
Consider $\vct{x} \in \mathbb{R}^n$ with $s$-block decreasing rearrangement $\vct{x} = ( \vct{x}_{S_1}, \vct{x}_{S_2}, \dots, \vct{x}_{S_{\lceil{n/s\rceil}}} )$. 
Suppose, for some parameter $\gamma \in [0,1]$, that $A \in \mathbb{R}^{m \times n}$ satisfies
\begin{equation}
\label{RIP_like}
d(1 -\gamma) \| \vct{x}_{S_{\ell}} \|_1  \leq \sum_{j=1}^m \left( \sum_{k \in S_{\ell}}^n a_{j,k} x_k^2 \right)^{1/2}, \quad \ell = 1,2, \dots, \lceil{ n/s \rceil}.
\end{equation}
Then 
\begin{equation*}
d (1 - \gamma) \| \vct{x}  \|_{1,2,s}  \leq \sum_{j=1}^m \left( \sum_{k=1}^n a_{j,k} x_{k}^2  \right)^{1/2}.
\end{equation*}
\end{prop}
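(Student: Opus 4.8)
The plan is to observe that the right-hand side is itself an $\ell_2$-norm of a vector of per-row contributions, and then to apply Minkowski's (triangle) inequality in $\ell_2$ to decouple the sum over rows from the aggregation over blocks. Concretely, set $R = \lceil n/s \rceil$ and, for each row $j \in [m]$, introduce the vector $\vct{w}_j \in \R^R$ whose $\ell$-th coordinate is $(\vct{w}_j)_\ell = \bigl( \sum_{k \in S_\ell} a_{j,k} x_k^2 \bigr)^{1/2}$. Since the blocks $S_1, \dots, S_R$ partition the support of $\vct{x}$ (coordinates $k$ outside the support contribute $x_k^2 = 0$ regardless of $a_{j,k}$), we have $\bigl( \sum_{k=1}^n a_{j,k} x_k^2 \bigr)^{1/2} = \| \vct{w}_j \|_2$, and hence $\sum_{j=1}^m \bigl( \sum_{k=1}^n a_{j,k} x_k^2 \bigr)^{1/2} = \sum_{j=1}^m \| \vct{w}_j \|_2$.

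Next I would apply the triangle inequality in $\ell_2^R$ to get $\sum_{j=1}^m \| \vct{w}_j \|_2 \geq \bigl\| \sum_{j=1}^m \vct{w}_j \bigr\|_2$. The $\ell$-th coordinate of $\sum_{j=1}^m \vct{w}_j$ is exactly $\sum_{j=1}^m \bigl( \sum_{k \in S_\ell} a_{j,k} x_k^2 \bigr)^{1/2}$, which by the hypothesis \eqref{RIP_like} is at least $d(1-\gamma) \| \vct{x}_{S_\ell} \|_1 \geq 0$. Since every coordinate of $\sum_j \vct{w}_j$ is nonnegative and $1 - \gamma \geq 0$, squaring coordinatewise is monotone, so $\bigl\| \sum_{j=1}^m \vct{w}_j \bigr\|_2 \geq \bigl( \sum_{\ell=1}^R d^2(1-\gamma)^2 \| \vct{x}_{S_\ell} \|_1^2 \bigr)^{1/2} = d(1-\gamma) \| \vct{x} \|_{1,2,s}$, which is the desired conclusion.

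There is essentially no serious obstacle here; the proof is a two-line reduction. The only points requiring care are the direction of Minkowski's inequality — we need $\sum_j \| \vct{w}_j \|_2 \geq \| \sum_j \vct{w}_j \|_2$, i.e. the triangle inequality itself, not any reverse form — and the use of $\gamma \leq 1$ to ensure that squaring preserves the per-block lower bounds when passing to the $\ell_2$ aggregation.
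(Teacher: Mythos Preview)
Your proof is correct and is essentially identical to the paper's argument: the paper also introduces the vectors $\vct{y}_j \in \R^{\lceil n/s\rceil}$ with $(\vct{y}_j)_\ell = \bigl(\sum_{k\in S_\ell} a_{j,k} x_k^2\bigr)^{1/2}$, applies the triangle inequality $\|\sum_j \vct{y}_j\|_2 \le \sum_j \|\vct{y}_j\|_2$, and then invokes the hypothesis coordinatewise.
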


\begin{proof}    
Consider the vector $\vct{y}_j = (y_{j,\ell}) \in \mathbb{R}^{\lceil{n/s\rceil}}$ with coordinates $y_{j,\ell} = \left( \sum_{k \in S_{\ell}} a_{j,k} x_k^2 \right)^{1/2}.$ By the triangle inequality, 
$$
\| \sum_{j=1}^m \vct{y}_j \|_2 = \sqrt{\sum_{\ell=1}^{\lceil{n/s \rceil}} \left( \sum_{j=1}^m \left( \sum_{k \in S_{\ell}} a_{j,k} x_k^2 \right)^{1/2} \right)^2} \leq  \sum_{j=1}^m \left( \sum_{k=1}^n a_{j,k} x_k^2 \right)^{1/2}  = \sum_{j=1}^m \| \vct{y}_j \|_2 
$$
Incorporating the assumptions \eqref{RIP_like} and recalling that $\| \vct{x} \|_{1,2,s} = \left( \sum_{\ell=1}^{ \lceil{n/s\rceil}} \| \vct{x}_{S_{\ell}} \|_1^2 \right)^{1/2}$ gives the desired result.
\end{proof}

{\section{Proof of Theorem \ref{main}} \label{proof:thm}}
\noindent In this section we put together the ingredients to prove Theorem \ref{main}.
Fix $\vct{x} \in \mathbb{R}^n$ with $s$-block decreasing rearrangement $\vct{x} = (\vct{x}_{S_1}, \vct{x}_{S_2}, \dots, \vct{x}_{S_{\lceil{n/s\rceil}}})$.   Fix parameter $\varepsilon > 0$. 
Consider $A = (a_{j,k}) \in \{0,1\}^{m \times n}$ a random binary matrix having $d$ ones per column as constructed in Proposition \ref{ballsinbins}.  From that proposition, 
\begin{align}
\label{firstprob}
\mathbb{P} & \left( \exists \ell \in \{1,2, \dots, \lceil{n/s \rceil}: \hspace{2mm} d(1 -e^{-1} - \varepsilon/2) \| \vct{x}_{S_{\ell}} \|_1 > \sum_{j=1}^m \left( \sum_{k \in S_{\ell}} a_{j,k} x_k^2 \right)^{1/2} \right) \nonumber \\
&\leq \sum_{\ell=1}^{\lceil{n/s \rceil}} \mathbb{P} \left( d(1 -e^{-1} - \varepsilon/2) \| \vct{x}_{S_{\ell}} \|_1 > \sum_{j=1}^m \left( \sum_{k \in S_{\ell}} a_{j,k} x_k^2 \right)^{1/2} \right) \nonumber \\
&\leq 2(n/s+1)s \exp( - \varepsilon^2 m / 8 ) \nonumber \\
&\leq 4n \exp( - \varepsilon^2 m / 8 ).
\end{align}
We now assume that the realization of the random matrix $A$ yields
$$
d(1 -e^{-1} - \varepsilon/2) \| \vct{x}_{S_{\ell}} \|_1 \leq \sum_{j=1}^m \left( \sum_{k \in S_{\ell}} a_{j,k} x_k^2 \right)^{1/2}, \quad \ell = 1,2, \dots, \lceil{n/s\rceil},
$$
and hence, by Proposition \ref{adjacency},
$$
d(1 -e^{-1} - \varepsilon/2) \| \vct{x} \|_{1,2,s} \leq \sum_{j=1}^m \left( \sum_{k =1}^n a_{j,k} x_k^2 \right)^{1/2}.
$$
By the above, this occurs with probability exceeding $1 - 4n \exp( - \varepsilon^2 m / 8 )$.

\bigskip

Consider now $G \in \mathbb{R}^{m \times n}$ having i.i.d. standard Gaussian entries, and let $\Phi = A \circ G$ where $\circ$ denotes the Hadamard (entrywise) product.  By Lemma \ref{lemma:mean}, $ \mathbb{E}_{G} \| \Phi \vct{x} \|_1 = \beta_0 \sum_{j=1}^m \left( \sum_{k =1}^n a_{j,k} x_k^2 \right)^{1/2}$ where $\beta_0 = \sqrt{2/\pi}$, and so by the above analysis it follows
$$\beta_0 d (1 - e^{-1} - \varepsilon/2) \| \vct{x}  \|_{1,2,s}  \leq  \mathbb{E}_{G} \| \Phi \vct{x} \|_1.$$
\noindent By Proposition \ref{exp:upper},  we have also the upper bound
\begin{equation}
\label{holdingbelow}
\mathbb{E}_{G} \| \Phi \vct{x} \|_1 \leq 1.63 \beta_0 d \| \vct{x} \|_{1,2,s}.
\end{equation}
Now, Proposition \ref{main:concentrate} gives that with respect to the draw of $G$, 
\begin{align}
\label{prob2}
\mathbb{P} \left( \left|  \| \Phi \vct{x} \|_1 - \mathbb{E}_{G} \| \Phi \vct{x} \|_1 
  \right| \geq \frac{\varepsilon}{2} \beta_0 d \| \vct{x} \|_{1,2,s}  \right) \leq 2 \exp\left( \frac{-\varepsilon^2 \beta_0^2 \| \vct{x} \|_{1,2,s}^2 d}{8 \| \vct{x} \|_2^2} \right).
  \end{align}
  We now assume that the realization of the random matrix $G$ yields
  $$
   \left|  \| \Phi \vct{x} \|_1 - \mathbb{E}_{G} \| \Phi \vct{x} \|_1 
  \right| \leq \frac{\varepsilon}{2} \beta_0 d \| \vct{x} \|_{1,2,s},
  $$
  which by the above occurs with probability exceeding $1-2 \exp\left( \frac{-\varepsilon^2 \beta_0^2 \| \vct{x} \|_{1,2,s}^2 d}{8 \| \vct{x} \|_2^2} \right).$
Adding together the probabilities that either our assumption on $A$ or our assumption on $G$ does not hold, we have shown that with probability exceeding $1 - 4n \exp( - \varepsilon^2 m / 8 ) -  2 \exp\left( \frac{-\varepsilon^2 \beta_0^2 m}{8s}(\frac{\| \vct{x} \|_{1,2,s}}{ \| \vct{x} \|_2})^2 \right),$
$$
d \beta_0 (1 - e^{-1} - \varepsilon) \| \vct{x} \|_{1,2,s}\leq \| \Phi \vct{x} \|_1 \leq d \beta_0 (1.63+ \varepsilon) \| \vct{x} \|_{1,2,s}.
$$
Setting $\Psi_s =  \frac{1}{d \beta_0} (A \circ G),$ recalling that $d = m/s,$ and using the bound $.63 \leq 1 - e^{-1},$ we recover the content of Theorem \ref{main}.\\

\section{Discussion}
Theorem~\ref{main} introduces a family of maps, which are shown to map finite-dimensional spaces equipped with the block $\ell_1/\ell_2$-norm to lower dimensional spaces equipped with the the $\ell_1$-norm, while with high probability preserving the norm up to a constant distortion factor. In Euclidean space, by contrast, Johnson-Lindenstrauss embeddings can be made to have distortion arbitrarily close to 1, that is, there is only a factor that can be made arbitrarily small by increasing the embedding dimension.

In this section we show that some distortion factor is indeed necessary; we give two explicit families of examples in arbitrarily large dimensions for which the fraction of their block $\ell_1/\ell_2$-norm and the $\ell_1$-norm of their image behave differently even in the asymptotic limit. 
Conditioned on the (dependent) random variables $a_{j,k}$, the entries of $\Phi_s$ and hence also the entries of $\Phi_s \vct{x}$ are independent Gaussian random variables. Thus the concentration of $\|\Phi_s \vct{x}\|_1$ around its mean is precisely understood, and it remains to compare the behavior of the mean for different instances of $\vct{x}$. As shown in Lemma~\ref{lemma:mean} above, this boils down to studying the quantity $\sum_{j=1}^m \left( \sum_{k=1}^n a_{j,k} x_{k}^2  \right)^{1/2}.$ The following proposition indeed provides two vectors with significantly different behavior of this quantity as normalized by the block $\ell_1/\ell_2$-norm.

\begin{prop} \label{prop:counterex} Choose $a_{j,k}$ as defined in Theorem~\ref{main} and assume that $m\leq n$. Then for each $\eta,\nu >0$, there exists a constant $C>0$ such that if $s>C$ and $\tfrac{n}{s \log(n)}>C$, the following holds with probability at least $1-\nu$.

Consider $\vct{x}, \vct{y}\in \R^{n+1}$ with 
 \[
  \vct{x} = \big(1, \frac{1}{\sqrt{ns}}, \frac{1}{\sqrt{ns}}, \dots, \frac{1}{\sqrt{ns}}\big)^t \text{ and } \vct{y}=(1, 0, \dots, 0)^t.
 \]
 Then one has $\tfrac{\sum_{j=1}^m \left( \sum_{k=1}^n a_{j,k} x_{k}^2  \right)^{1/2}}{d\|\vct{x}\|_{1,2,s}}\geq \sqrt{2}-\eta$, while $\tfrac{\sum_{j=1}^m \left( \sum_{k=1}^n a_{j,k} y_{k}^2  \right)^{1/2}}{d\|\vct{y}\|_{1,2,s}}\equiv 1$.
\end{prop}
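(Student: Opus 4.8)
The plan is to analyze the two vectors separately, exploiting that $\vct{y}$ is already $1$-sparse (hence trivial) and that $\vct{x}$ has a spike plus a flat tail whose total $\ell_2^2$-mass is small but whose $\ell_1$-mass is large. For $\vct{y}=(1,0,\dots,0)^t$, each row of $A$ either contains the single nonzero column of $\vct{y}$ or it does not, so $\left(\sum_k a_{j,k} y_k^2\right)^{1/2}$ equals $1$ for exactly the $d$ rows hitting that column and $0$ otherwise; thus $\sum_{j=1}^m\left(\sum_k a_{j,k}y_k^2\right)^{1/2}=d$, while $\|\vct{y}\|_{1,2,s}=\|\vct{y}\|_1=1$, giving the exact ratio $1$. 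This requires no probabilistic argument.

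For $\vct{x}$, first I would compute $\|\vct{x}\|_{1,2,s}$: the $s$-block decreasing rearrangement places the spike $1$ together with $s-1$ tail entries in $S_1$ (so $\|\vct{x}_{S_1}\|_1 = 1 + (s-1)/\sqrt{ns}$), and the remaining $\approx n/s$ blocks each have $\ell_1$-norm $\approx s/\sqrt{ns}=\sqrt{s/n}$. Hence $\|\vct{x}\|_{1,2,s}^2 \approx (1 + o(1))^2 + (n/s)\cdot (s/n) = 2 + o(1)$ once $s > C$ and $n/(s\log n) > C$ are large; so $d\|\vct{x}\|_{1,2,s} \to \sqrt{2}\,d$. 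Then I would split the numerator: writing $\vct{x} = \vct{y} + \vct{w}$ where $\vct{w}$ is the flat tail of height $1/\sqrt{ns}$ on $n$ coordinates, I'd use $\left(\sum_k a_{j,k}x_k^2\right)^{1/2} \geq \left(\sum_k a_{j,k}w_k^2\right)^{1/2}$ in the rows not hitting the spike column, plus the contribution $\gtrsim 1$ from the $d$ rows that do hit the spike, and combine via the block structure — effectively invoking the lower bound of Proposition~\ref{ballsinbins} (or a direct second-moment computation) applied to the tail. The key point is that for each of the $\approx n/s$ tail blocks, the quantity $\sum_j\left(\sum_{k\in S_\ell}a_{j,k}x_k^2\right)^{1/2}$ concentrates near $d\|\vct{x}_{S_\ell}\|_1 = d\sqrt{s/n}$ up to factor $(1-e^{-1}-\varepsilon)$; summing over blocks and adding the spike term's contribution $d\cdot 1$ yields numerator $\gtrsim d(1 - O(\eta))\sqrt{2}$, hence ratio $\geq \sqrt{2}-\eta$.

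The main obstacle is getting the tail contribution close enough to its ``ideal'' value $d\sum_\ell \|\vct{x}_{S_\ell}\|_1$ with the required precision — the crude bound $(1-e^{-1})$ from Proposition~\ref{ballsinbins} is not good enough, since we want the ratio to approach $\sqrt{2}$, not merely to be bounded below by a constant times $\sqrt{2}$. For the flat tail, though, all nonzero entries have the same magnitude $1/\sqrt{ns}$, so $\sum_j\left(\sum_{k\in S_\ell}a_{j,k}x_k^2\right)^{1/2} = \frac{1}{\sqrt{ns}}\sum_j\sqrt{r_j}$ where $r_j$ is the number of columns of $S_\ell$ hitting row $j$; since $d = m/s$ and $|S_\ell|\approx s$, the $r_j$ have mean $\approx d/s \cdot s / m \cdot m$... more carefully, the expected number of ones in each row restricted to $S_\ell$ is $d|S_\ell|/m \approx 1$, and one shows $\sum_j \sqrt{r_j} = (1-o(1))\sum_j r_j = (1-o(1))d|S_\ell|$ because the $r_j$ are highly concentrated near $0$ and $1$ with negligible mass on $r_j \geq 2$ in the regime $d|S_\ell|/m = o(1)$ — wait, here $d|S_\ell|/m \approx 1$, so instead I would argue via a Poisson approximation that $\E\sqrt{r_j} = (1 - c + o(1))$ for an explicit $c$, or more robustly note that for the \emph{lower} bound it suffices that collisions are rare, which holds since each block contributes few balls relative to $m$. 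I would make this rigorous by a union bound over the $\approx n/s + 1 \leq n$ blocks of a Chernoff/Azuma concentration estimate for $\sum_j\sqrt{r_j}$, absorbing the failure probability into $\nu$ and choosing $C$ large enough (depending on $\eta,\nu$) that all error terms are below $\eta$; the hypotheses $m\leq n$, $s>C$, $n/(s\log n)>C$ are exactly what make the number of blocks manageable and the per-block concentration tight.
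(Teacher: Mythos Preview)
Your treatment of $\vct{y}$ and your computation of $\|\vct{x}\|_{1,2,s}^2 \approx 2$ are correct and match the paper. The gap is in how you lower-bound the numerator for $\vct{x}$.

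The block-by-block route cannot reach the target $\sqrt{2}$ at all. For \emph{any} block $S_\ell$ one has the deterministic upper bound
\[
\sum_{j=1}^m \Big(\sum_{k\in S_\ell} a_{j,k}x_k^2\Big)^{1/2} \le \sum_{j=1}^m \sum_{k\in S_\ell} a_{j,k}|x_k| = d\|\vct{x}_{S_\ell}\|_1,
\]
so the per-block constant $(1-\gamma_\ell)$ in the hypothesis of Proposition~\ref{adjacency} never exceeds $1$. Consequently that proposition can only ever give numerator $\ge d(1-\gamma)\|\vct{x}\|_{1,2,s}$, i.e.\ ratio $\ge 1-\gamma \le 1$, whereas you need ratio $\ge \sqrt{2}-\eta > 1$. (Your own stated conclusion ``numerator $\gtrsim d(1-O(\eta))\sqrt{2}$'' would, after dividing by $d\|\vct{x}\|_{1,2,s}\approx d\sqrt{2}$, also only give ratio $\ge 1-O(\eta)$, not $\sqrt{2}-\eta$.) Moreover, your attempted improvement for flat blocks does not work: a tail block of size $s$ places exactly $ds=m$ ones into $m$ rows, so the row-occupancies $r_j$ are in the Poisson(1) regime, not the sparse regime; thus collisions are \emph{not} rare, $r_j$ does \emph{not} concentrate on $\{0,1\}$, and in fact $\E\sqrt{r_j}\to\E\sqrt{Z}\approx 0.77$ for $Z\sim\mathrm{Pois}(1)$, a fixed loss.

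The paper's proof avoids block decomposition entirely and works row by row on the \emph{whole} tail. For each fixed row $j$, the variables $(a_{j,k})_{k\ge 2}$ are independent Bernoulli$(d/m)=\mathrm{Bernoulli}(1/s)$ across columns, so $\sum_{k\ge 2}a_{j,k}$ is Binomial$(n,1/s)$ with large mean $n/s$; a binomial large-deviation bound gives $\sum_{k\ge 2}a_{j,k}x_k^2 \ge (1-\delta)/s^2$ for every $j$ with probability $\ge 1 - m e^{-c n/s}$. Then the $d$ rows hitting the spike contribute at least $d\sqrt{1+(1-\delta)/s^2}$ and the remaining $m-d$ rows contribute at least $(m-d)\sqrt{1-\delta}/s$, so the numerator is at least $d(2-\delta-1/s)$. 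Dividing by $d\|\vct{x}\|_{1,2,s}\le d\sqrt{2+2\sqrt{s/n}}$ gives a ratio arbitrarily close to $\sqrt{2}$ once $s$ and $n/(s\log n)$ are large. The essential idea you are missing is that by aggregating all $n$ tail coordinates at once, each row sees $\approx n/s\gg 1$ of them, putting you in a law-of-large-numbers regime where the square root loses nothing; splitting into $s$-sized blocks puts each row in the Poisson(1) regime where the square root loses a constant.
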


\begin{proof}
To facilitate the calculations, we assume that $s$ divides $n+1$, if this is not the case, one obtains an only slightly changed result. We first determine the block $\ell_1/\ell_2$-norm of $\vct{x}$ and $\vct{y}$, obtaining $\|\vct{y}\|^2_{1,2,s}=1$ and
\begin{equation}
 \|\vct{x}\|^2_{1,2,s}= (1+\frac{s-1}{\sqrt{s n}})^2 + (\frac{n+1}{s}-1)\frac{s}{n} = 2 +2\frac{s-1}{\sqrt{s n}} -\frac{1}{n} +\frac{1}{ns}\leq 2+2\sqrt{\frac{s}{n}}.
\end{equation}
To estimate the numerators, we note that for $j$ fixed, the $a_{j,k}$ are independent Bernoulli random variables with parameter $p=\tfrac{1}{s}$. Thus
$\E \sum_{k=2}^{n+1} a_{j,k} x_{k}^2 =\tfrac{1}{s^2}$ and, for $\delta>0$,
\[
 \P\left( \sum_{k=2}^{n+1} a_{j,k} x_{k}^2  <(1- \delta) \frac{1}{s^2} \right) = \P\left( \sum_{k=2}^{n+1} a_{j,k} <(1- \delta)\frac{n}{s}\right).
\]
The latter is a large deviation probability for the binomial distribution. It can be bounded via the relative entropy between two biased coins
\begin{equation}
 H(a,p)=a \log(\frac{a}{p})+(1-a)\log(\frac{1-a}{1-p}).
\end{equation}
Then (for example by Theorem 1 in \cite{AG}, applied to the expression with the roles of ones and zeros exchanged), the large deviation probability is bounded by
\begin{equation}
 \P\left( \sum_{k=2}^{n+1} a_{j,k} < (1-\delta)\frac{n}{s} \right) \leq \exp\left(-n H\left( \frac{1+\delta}{s}, \frac{1}{s} \right)\right) \leq \left( \frac{1}{1+\delta} \right)^{(1+\delta)n/s} =e^{-cn/s},
\end{equation}
where $c=(1+\delta)\log(1+\delta)>0$.
Now we know that $a_{j,1}=1$ for exactly $d$ randomly chosen values of $j$. For these values of $j$, one has
\[
 \P\left( |\sum_{k=1}^{n+1} a_{j,k} x_{k}^2  < 1+(1- \delta) \tfrac{1}{s^2} \right) \leq e^{-cn/s}, 
\]
while for the other values of $j$, one has
\[
 \P\left( \sum_{k=1}^{n+1} a_{j,k} x_{k}^2   < (1-\delta)\tfrac{1}{s^2}  \right) \leq e^{-cn/s}.
\]
Thus with probability at least $1-2m e^{-cn/s}$, one has
\begin{align*}
\sum_{j=1}^m \left( \sum_{k=1}^{n+1}a_{j,k} x_{k}^2  \right)^{1/2} &\geq d \sqrt{1+(1-\delta)\tfrac{1}{s^2}} + (m-d)\sqrt{1-\delta}\tfrac{1}{s}\\&\geq
d (1+\sqrt{1-\delta}- \tfrac{1}{s}(\sqrt{1-\delta}))\\
&\geq d(2-\delta -\tfrac{1}{s})
\end{align*}
and consequently 
\begin{align*}
 \tfrac{\sum_{j=1}^m \left( \sum_{k=1}^n a_{j,k} x_{k}^2  \right)^{1/2}}{d \|\vct{x}\|_{1,2,s}} & \geq \frac{2-\delta -\tfrac{1}{s}}{\sqrt{2+2\sqrt{s/n}}}.
\end{align*}
It is clear that if $s$ and $\tfrac{n}{s}$ are large enough and if $\delta$ is chosen small enough, then this expression is ensured to get arbitrarily close to $\sqrt{2}$, as desired. On the other hand, as $m\leq n$, the associated probability of failure is bounded by $m e^{-cn/s}\leq e^{\log(n) - cn/s}$, which, for fixed $c$ and $s$, becomes arbitrarily small for $n$ large enough. 

The estimate for $\vct{y}$ follows directly from the fact that $a_{j,1}=1$ for exactly $d$ values of $j$ and $0$ otherwise.
\end{proof}

   \begin{remark}
    The definition of $\vct{x}$ is inspired by the proof of Proposition~\ref{exp:upper}. Namely, the first inequality in \eqref{inequalities} is sharp when the two terms are equal, and the last inequality in \eqref{eq:revtri} is the sharper, the more the two summands differ.  So the constant in Proposition~\ref{exp:upper} can be significantly improved unless both of these facts happen at the same time. This basically boils down to having a jump within the first block and roughly the same block norm contribution of the first block and the tail, which $\vct{x}$ is an extreme example of.
   \end{remark}

Proposition~\ref{prop:counterex} provides a counterexample, for which a constant distortion is necessary in Theorem~\ref{main}. Considering that our proofs above yield the exact analogue to Theorem~\ref{main} also for the interpolation norm $K(\vct{x},\sqrt{s})$ instead of the block norm $\|\vct{x}\|_{1,2,s}$, one may ask whether the former norm has better empirical performance. To answer this question, we numerically test the validity of Theorem~\ref{main} for these two norms on four different types of signals for $n=1000$, $s=10$ fixed and varying number of measurements. The results are presented in Figure~\ref{fig:norms}; for each of the norms, we plot the relative distortion $\tfrac{\|\Psi_s \vct{x}\|_1 - \| \vct{x} \|}{\| \vct{x} \|}$, where $\Psi_s$ is as in Theorem~\ref{main} and $\|\cdot\|$ is the norm in question.

In example (a),  $s$-sparse signals are generated by choosing a support at random and the corresponding entries according to the standard normal distribution. In example (b), we consider the fixed $1$-sparse vector supported in the first position. In example (c), we consider signals inspired by the counterexample of Proposition~\ref{prop:counterex} with one large entry and the other entries chosen according to a normal distribution with considerably smaller variance. Finally, in example (d), we consider signal with independent entries drawn from the standard normal distribution. Our experiments show that for both norms, the average $\ell_1$-norm of the image of normalized vectors in the different classes behave differently. As expected, for the block norm, the signals inspired by the counterexample yield images with a larger $\ell_1$-norm than for $1$-sparse and also for random signals. On the other hand, for the interpolation norm, images of random signals typically have smaller norm, while the vectors 
inspired by $\vct{x}$ and $\vct{y}$ in Proposition~\ref{prop:counterex} have a comparable behavior. These observations made us choose to present our results in terms of the block norm rather than the interpolation norm, as for the block norm, the upper and lower distortion factors essentially disappear for random signals (which we see as representing the generic behavior). Notably, in the example of $s$-sparse signals, the resulting images have a smaller norm. In this case, the behavior for the two norms is identical, as for sparse vectors they both reduce to the $\ell_1$-norm.

\begin{figure}[h!]
\includegraphics[width=14cm]{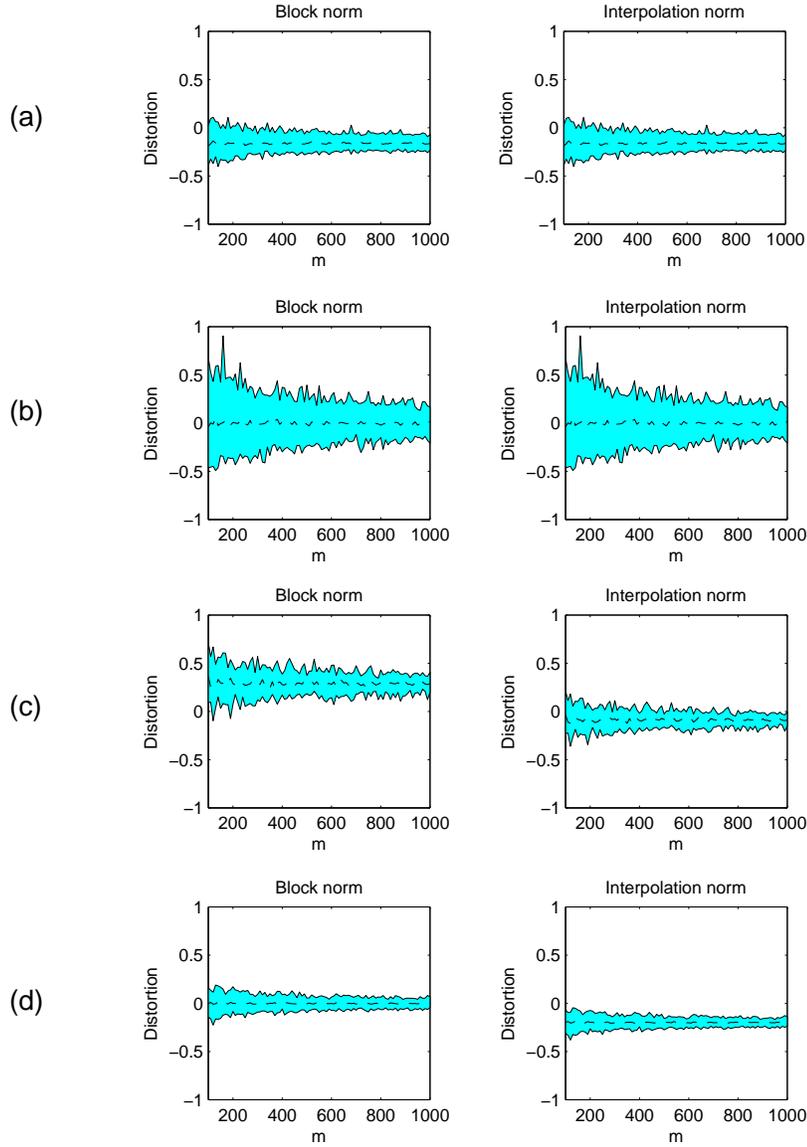}
\caption{A comparison of the distortion $(\| \Psi_s(x) \|_1 -\| x \|)  / \| x \|$ where $\| x \|$ is either the block norm $\| x \|_{1,2,s}$ or the interpolation norm ${\cal K}(x,\sqrt{s})$ with $\Psi_s$ as in Theorem \ref{main} and $s = 10$ fixed throughout. We consider distortion statistics (the minimum, maximum, and median) over $64$ instances of $\Psi_s$ over several signal classes: (a) $s$-sparse signals, (b) 1-sparse signals, (c ) a two-level signal inspired by Proposition \ref{prop:counterex}, and (d) Gaussian random signals.}
\label{fig:norms}
\end{figure}

\section*{Acknowledgment}
We thank Anna Gilbert, Arie Israel, Joe Neeman, Jelani Nelson, and Mark Rudelson for helpful input and suggestions.  We also thank the anonymous referees for their valuable comments.  F.~Krahmer was supported by the German Science Foundation (DFG) in the  context of the Emmy Noether Junior Research Group ``RaSenQuaSI'' (grant KR 4512/1-1).  R.~Ward was supported in part by an AFOSR Young Investigator Award, DOD-Navy grant N00014-12-1-0743, and an NSF CAREER Award.

\bibliography{biblio}
\bibliographystyle{abbrv}

\section{Appendix}

\subsection{Verifying that $\| \cdot \|_{1,2,s}$ is a norm}
Here we verify that the function $\| \vct{x} \|_{1,2,s} := \sqrt{ \sum_{\ell =1}^{\lceil{n/s \rceil}} \| \vct{x}_{S_{\ell}} \|_1^2 }$ is indeed a norm.

\bigskip

It is straightforward that $\| \vct{x} \|_{1,2,s} \geq 0$ and that $\| \vct{x} \|_{1,2,s} = 0$ implies $\vct{x} = 0$.  It is also clear that $\|  a \vct{x} \|_{1,2,s} = | a | \| \vct{x} \|_{1,2,s}$.  It remains to verify the triangle inequality: $\| \vct{x} + \vct{y} \|_{1,2,s} \leq \| \vct{x} \|_{1,2,s} + \| \vct{y} \|_{1,2,s}$ for any $\vct{x}, \vct{y} \in \mathbb{R}^n$.  
To do this, let us set up some notation. Denote the partitioned supports in the block decreasing rearrangement of $\vct{x}$ by $S_1, S_2, \dots $, the partitioned supports in the block decreasing rearrangement of $\vct{y}$ by $T_1, T_2, \dots $ and the partitioned supports of the block decreasing rearrangement of $\vct{x+y}$ by $U_1, U_2, \dots$.   
 Then
\begin{align}
\| \vct{x + y} \|_{1,2,s} &= \sqrt{ \sum_{\ell=1}^{\lceil{ n/s \rceil}} \| (\vct{x+y})_{U_{\ell}} \|_1^2 }  \leq \sqrt{ \sum_{\ell=1}^{\lceil{ n/s \rceil}} \| \vct{x}_{U_{\ell}} \|_1^2 } + \sqrt{ \sum_{\ell=1}^{\lceil{ n/s \rceil}} \| \vct{y}_{U_{\ell}} \|_1^2 } \nonumber 
\end{align}
thanks to the block $\ell_1/\ell_2$ vector norm (with support sets fixed) satisfying the triangle inequality.   Now, $\sum_{\ell=1}^{\lceil{ n/s \rceil}} \| \vct{x}_{U_{\ell}} \|_1^2 \leq  \sum_{\ell=1}^{\lceil{ n/s \rceil}} \| \vct{x}_{S_{\ell}} \|_1^2 $ can be seen to hold by appealing to \emph{Karamata's inequality} \cite{k1} to the sequences $\vct{r}_1 = (\| \vct{x}_{S_1} \|_1, \| \vct{x}_{S_2} \|_1, \dots )$ and $\vct{r}_2 = ( \| \vct{x}_{U_1} \|_1, \| \vct{x}_{U_2} \|_1, \dots)$, noting that $\vct{r}_1$ majorizes $\vct{r}_2$.  Using the same argument to show $\sum_{\ell=1}^{\lceil{ n/s \rceil}} \| \vct{y}_{U_{\ell}} \|_1^2 \leq  \sum_{\ell=1}^{\lceil{ n/s \rceil}} \| \vct{y}_{T_{\ell}} \|_1^2 $,  we then have that the RHS expression above is 
\begin{align}
&\leq  \sqrt{ \sum_{\ell=1}^{\lceil{ n/s \rceil}} \| \vct{x}_{S_{\ell}} \|_1^2 } + \sqrt{ \sum_{\ell=1}^{\lceil{ n/s \rceil}} \| \vct{y}_{T_{\ell}} \|_1^2 } = \| \vct{x} \|_{1,2,s} + \| \vct{y} \|_{1,2,s},
\end{align}
verifying the triangle inequality.
 
 \subsection{Proof of Lemma \ref{concentrate}}
Recall that $Y_i \sim {\cal N}(0, \sigma_i^2)$ has density function is $dF_i(t) =\beta_0 \frac{1}{2\sigma_i} \exp{(-\frac{t^2}{2\sigma_i^2})} dt$ where   $\beta_0 = \sqrt{2/\pi}$.  We may then estimate for each $u \geq 0$
\begin{align}
\mathbb{E} \left[ \exp{(u (|Y_i| - \beta_0 \sigma_i)} \right]  &= \int_{t = -\infty}^{\infty} \exp\left( u(|t|-\beta_0 \sigma_i) \right) dF_i(t) \nonumber \\
&= \frac{\beta_0}{\sigma_i} \int_{t=0}^{\infty} \exp(u(t-\beta_0 \sigma_i))\exp{\left(-\frac{t^2}{2\sigma_i^2}\right)} dt \nonumber \\
(\text{set } s = \sqrt{2} t / \sigma_i - \sqrt{2} \sigma_i u ) \quad \quad \quad &=   \exp{(u^2 \sigma_i^2/2 - u \beta_0 \sigma_i)} \sqrt{2} \left[ (\beta_0/2)\int_{s=- \sqrt{2} \sigma_i u}^{\infty} \exp{(-s^2/4)} ds \right] \nonumber \\
&=   \exp{(u^2 \sigma_i^2/2 - u \beta_0 \sigma_i)} \sqrt{2} \left[1/2 + (\beta_0/2)\int_{s=0}^{ \sqrt{2}\sigma_i u} \exp{(-s^2/4)} ds \right] \nonumber
\end{align}
One of the two cases holds:
\begin{enumerate}
\item If $u \geq \frac{\log(2)}{2 \beta_0 \sigma_i}$, then $\sqrt{2} \leq \exp(u \beta_0 \sigma_i)$
\item If $0 \leq u <  \frac{\log(2)}{2 \beta_0 \sigma_i}$, then 
$$\sqrt{2} \left[1/2 + (\beta_0/2)\int_{s=0}^{ \sqrt{2} \sigma_i u} \exp{(-s^2/4)} ds \right] \leq 1/\sqrt{2} + \frac{\beta_0}{\sqrt{2}} \sqrt{2}\sigma_i u \leq 1 $$
\end{enumerate}
In either case, we may bound the final RHS expression above to estimate 
\begin{align}
\mathbb{E} \left[ \exp{(u (|Y_i| - \beta_0 \sigma_i)} \right]  &\leq   \exp{(u^2 \sigma_i^2/2)}.
\end{align}
A similar analysis reveals that also
\begin{align}
\mathbb{E} \left[ \exp{(-u (|Y_i| - \beta_0 \sigma_i)} \right]  &\leq  \exp{(u^2 \sigma_i^2/2)}.
\end{align}

\noindent Recall that the random variable of interest is of the form $Z = \frac{1}{d} \sum_i ( |Y_i| - \mathbb{E} |Y_i| ) =  \frac{1}{d} \sum_i ( |Y_i| - \beta_0 \sigma_i )$ with $Y_i \sim {\cal N}(0, \sigma_i^2)$ independent.  Recall also that $\sum_i \sigma_i^2 \leq d \alpha^2$ by assumption. It follows that
\begin{align}
\mathbb{E} (\exp{(u Z)} ) &=  \mathbb{E} \exp\left( {\frac{u}{d} \sum_i ( |Y_i| - \beta_0 \sigma_i)} \right) \nonumber \\
&=  \mathbb{E} \prod_{i=1}^m \exp\left( {\frac{u}{d} ( |Y_i| - \beta_0 \sigma_i)} \right) \nonumber \\
&= \prod_{i=1}^m \mathbb{E} \left[ \exp\left( {\frac{u}{d} ( |Y_i| - \beta_0 \sigma_i)} \right) \right] \nonumber \\
&\leq  \prod_{i=1}^m  \exp{\left(\frac{u^2 \sigma_i^2}{2d^2} \right)} \nonumber \\
&= \exp\left(\frac{u^2}{2d^2} \sum_i \sigma_i^2 \right) \leq \exp\left(\frac{u^2  \alpha^2}{2d} \right)\nonumber 
\end{align}
Recall that if a random variable $X$ satisfies $\mathbb{E}[X] = 0$ and $\mathbb{E} \exp(u X) \leq \exp(C u^2)$ for all $u \in \mathbb{R}$ and for some constant $C > 0$, then $\mathbb{P}( |X| \geq \lambda) \leq 2 \exp(-\frac{\lambda^2}{4C})$ for each $\lambda \geq 0$ (see, for example, Proposition 7.24 of \cite{fr13}).  Since $\mathbb{E}[Z] = 0$, it follows that $\mathbb{P}( |Z| \geq \lambda) \leq 2 \exp(-\frac{\lambda^2 d}{2 \alpha^2})$ for each $\lambda \geq 0$. This proves Lemma \ref{concentrate}.

\end{document}